\newcommand\figsubref[2]{\hyperref[#1]{\ref*{#1}#2}}
\DeclareMathOperator{\diag}{diag}
\DeclareMathOperator{\Tr}{Tr}
\newcommand{\defeq}{\vcentcolon=}
\newcommand{\Minfo}{$\mathcal{M}$-information\xspace}
\newcommand{\M}{\ensuremath{\mathcal{M}}\xspace} 
\newcommand{\Winfo}{$\mathcal{W}$-information\xspace}
\newcommand{\W}{\ensuremath{\mathcal{W}}\xspace}
\NewDocumentCommand{\posdef}{o}{%
  \IfNoValueTF{#1}
    {\ensuremath{S_n^{++}}}
    {\ensuremath{S_{#1}^{++}}}%
}
\newcommand{\RealX}[1]{\ensuremath{\mathbb{R}^{#1\times #1}}}
\newcommand{\bV}{\ensuremath{\bm V\xspace}}
\newcommand{\bX}{\ensuremath{\bm X\xspace}}
\newcommand{\bY}{\ensuremath{\bm Y\xspace}}
\newcommand{\bZ}{\ensuremath{\bm Z\xspace}}
\newcommand{\phiid}{\ensuremath{\Phi\mathrm{ID}}\xspace}
\newcommand{\SX}{\ensuremath{\Sigma_X}}
\newcommand{\SY}{\ensuremath{\Sigma_Y}}
\newcommand\rtr{\ensuremath{{R}\!\rightarrow\!{R}}\xspace}
\newcommand\rtx{\ensuremath{{R}\!\rightarrow\!{U}_1}\xspace}
\newcommand\rty{\ensuremath{{R}\!\rightarrow\!{U}_2}\xspace}
\newcommand\rts{\ensuremath{{R}\!\rightarrow\!{S}}\xspace}
\newcommand\xtr{\ensuremath{{U}_1\!\rightarrow\!{R}}\xspace}
\newcommand\xtx{\ensuremath{{U}_1\!\rightarrow\!{U}_1}\xspace}
\newcommand\xty{\ensuremath{{U}_1\!\rightarrow\!{U}_2}\xspace}
\newcommand\xts{\ensuremath{{U}_1\!\rightarrow\!{S}}\xspace}
\newcommand\ytr{\ensuremath{{U}_2\!\rightarrow\!{R}}\xspace}
\newcommand\ytx{\ensuremath{{U}_2\!\rightarrow\!{U}_1}\xspace}
\newcommand\yty{\ensuremath{{U}_2\!\rightarrow\!{U}_2}\xspace}
\newcommand\yts{\ensuremath{{U}_2\!\rightarrow\!{S}}\xspace}
\newcommand\str{\ensuremath{{S}\!\rightarrow\!{R}}\xspace}
\newcommand\stx{\ensuremath{{S}\!\rightarrow\!{U}_1}\xspace}
\newcommand\sty{\ensuremath{{S}\!\rightarrow\!{U}_2}\xspace}
\newcommand\sts{\ensuremath{{S}\!\rightarrow\!{S}}\xspace}
\newcommand\alphaeq{\stackrel{\mathclap{\normalfont\mbox{\scriptsize{(a)}}}}{=}}
\newcommand\betaeq{\stackrel{\mathclap{\normalfont\mbox{\scriptsize{(b)}}}}{=}}
\newcommand\alphadisequal{\stackrel{\mathclap{\normalfont\mbox{\scriptsize{(a)}}}}{\geq}}
\newcommand\betadiseq{\stackrel{\mathclap{\normalfont\mbox{\scriptsize{(b)}}}}{>}}
\newcommand\gammadiseq{\stackrel{\mathclap{\normalfont\mbox{\scriptsize{(c)}}}}{>}}
\newcommand*{\figuretitle}[1]{%
    {\centering  
    \textbf{#1}
    \par\smallskip}}
\newtheorem{definition}{Definition}
\newtheorem{lemma}{Lemma}
\newtheorem{proposition}{Proposition}
\newcounter{assumption}
\newenvironment{assumption}[1][]{\refstepcounter{assumption}
   \noindent \textbf{Assumption \theassumption} (\textit{#1}).}{}
\newcounter{condition}
\newenvironment{condition}[1][]{\refstepcounter{condition}
   \noindent \textbf{Condition \thecondition} (\textit{#1}).}{}
\algrenewcommand\algorithmicloop{}
\newcommand{\V}{{\color{green}\ding{51}}}
\newcommand{\X}{{\color{red}\ding{55}}}
\begin{document}

\title{
    \Large \textbf{
    A scalable estimator of higher-order information \\ in complex dynamical systems}%
    }
\author{Alberto Liardi}
\email{a.liardi@imperial.ac.uk}
\affiliation{Department of Computing, Imperial College London, UK}

\author{George Blackburne}
\affiliation{Department of Experimental Psychology, University College London, UK}
\affiliation{Department of Computing, Imperial College London, UK}

\author{Hardik Rajpal}
\affiliation{\mbox{Center for Complexity Science, Department of Mathematics, Imperial College London, UK}}

\author{Fernando E. Rosas}
\affiliation{\mbox{Center for Complexity Science, Department of Mathematics, Imperial College London, UK}}
\affiliation{Sussex AI and Sussex Centre for Consciousness Science, University of Sussex, Brighton, UK}

\author{Pedro A.M. Mediano}
\affiliation{Department of Computing, Imperial College London, UK}
\affiliation{Department of Experimental Psychology, University College London, UK}

\newbool{inappendix}
\boolfalse{inappendix}

\begin{abstract}
  \noindent
  Our understanding of complex systems rests on our ability to characterise how they perform distributed computation and integrate information. Advances in information theory have introduced several quantities to describe complex information structures, where collective patterns of coordination emerge from higher-order (i.e.\ beyond-pairwise) interdependencies. 
  Unfortunately, the use of these approaches to study large complex systems is severely hindered by the poor scalability of existing techniques. Moreover, there are relatively few measures specifically designed for multivariate time series data. 
  Here we introduce a novel measure of information about macroscopic structures, termed \Minfo, which quantifies the higher-order integration of information in complex dynamical systems. We show that \Minfo can be calculated via a convex optimisation problem, and we derive a robust and efficient algorithm that scales gracefully with system size. Our analyses show that \Minfo is resilient to noise, indexes critical behaviour in artificial neuronal populations, and reflects states of consciousness and task performance in real-world macaque and mouse neuroimaging data. Furthermore, \Minfo can be incorporated into existing information decomposition frameworks to reveal a comprehensive taxonomy of information dynamics. Taken together, these results help us unravel collective computation in large complex systems.
\end{abstract}

\maketitle

\section{Introduction}

How can the collective interactions within a complex system best be measured and understood? This question lies at the core of our efforts to uncover the organising principles of systems as diverse as biological networks, social dynamics, and physical processes~\cite{jensen2022complexity}. 
A defining feature of such systems is the emergence of collective, non-linear dynamics from interactions that go beyond simple pairwise relationships. These \emph{higher-order behaviours}~\cite{rosas2022disentangling, battiston2021physics} give rise to macroscopic properties that are not reducible to individual components, and that often prove more functionally informative and causally predictive of the system's behaviour than lower-level descriptions~\cite{hoel2013quantifying, hoel2025causal, barnett2021dynamical}.

A variety of mathematical frameworks have been developed to characterise such higher-order structures, originating in fields such as network theory~\cite{schneidman2006weak, ganmor2011sparse, yu2011higher}, algebraic topology~\cite{giusti2015clique, sizemore2018cliques, petri2014homological}, and information theory~\cite{williams2010nonnegative,timme2016high,mediano2022greater}. Among these, information-theoretic approaches provide a principled and computationally efficient way to quantify statistical dependencies and their organisation across spatio-temporal scales~\cite{rosas_reconciling_2020, golan2022information, mediano2022greater, luppi2024information}.
Within this framework, higher-order dependencies have been linked to diverse collective phenomena, including the adaptive, resilient, and intelligent behaviour in neural systems~\cite{tononi1998complexity, buzsaki2006rhythms, friston2010free, bassett2011understanding, lecun2015deep, hassabis2017neuroscience, raghu2020survey}, the non-linear critical dynamics near phase transitions in physical models~\cite{matsuda1996mutual, brown2022information, mediano2022integrated}, and the collective behaviour in ecological systems~\cite{rajpal2025information}, economic networks~\cite{rajpal2023synergistic}, and artificial neural networks~\cite{proca2024synergistic, sorokina2008detecting, tsang2017detecting, ehrlich2023a, gutknecht2025shannon, tolle2024evolving}.

However, despite the vast proliferation of measures of higher-order information (see Sec.~\ref{sec:rel_work}), achieving a comprehensive yet computationally tractable quantification of the collective structure of a complex system remains an open problem \cite{amari2001information}. 
Accordingly, the main challenge in %
this area is to devise a measure that:
    (1) captures all higher-order interdependencies between multivariate variables,
    (2) scales efficiently with the system size, 
    (3) is tailored to dynamical systems composed of parts evolving jointly over time, and
    (4) identifies the higher-order contributions explicitly and independently, rather than expressing them only as part of a balance between higher- and lower-order dependencies. %
Existing approaches meet some of these criteria: for instance, conditions (1) and (2) are resolved by the methods in Refs.~\cite{chechik2001group, balduzzi2008integrated, rosas2019quantifying}, while (1), (3) and (4) are satisfied by some information decomposition frameworks~\cite{mediano2025toward, rosas2018information, faes2025partial}.
However, to the best of our knowledge, no existing measure successfully addresses all four challenges simultaneously. 

To tackle this issue, here we introduce a novel estimator of higher-order information in complex dynamical systems, which we term \Minfo. 
Our approach builds on the notion of \emph{union information} 
introduced by Bertschinger \textit{et al.} and Griffith and Koch~\cite{bertschinger2014quantifying, griffith2014quantifying}, which we extend to more general scenarios comprising multiple inputs and outputs. 

We begin by introducing the mathematical foundations of \Minfo %
and propose a practical implementation for its computation that is robust and easily scalable to large systems.
We show that our multivariate formulation of \Minfo has a natural application to dynamical systems, validating this property first on synthetic dynamical models, and then on empirical ECoG data from macaque cortex and neuropixel recordings from mice performing a visual discrimination task.

Our main original contributions include theoretical, computational, and empirical insights. Specifically: 
\begin{enumerate}
    \item We introduce the \Minfo, a novel multivariate information-theoretic measure of higher-order dependencies in complex systems;
    \item We show that \Minfo can be efficiently calculated in Gaussian distributions via a convex optimisation problem;
    \item We illustrate the behaviour of \Minfo on synthetic models, and demonstrate its potential for empirical applications using real-world brain activity data;
    \item We integrate \Minfo with existing information decomposition frameworks, providing a fine taxonomy of information flow in dynamical systems.
\end{enumerate}

\section{A measure of higher-order information} \label{sec:minfo}

\subsection{Preliminaries: Quantifying union information}

Consider a scenario where $d_X$ random variables $\bX=\{X_1, ..., X_{d_X}\}$, referred to as \textit{sources}, are used to obtain information about another variable $Y$, referred to as \textit{target}. Let's denote as $P(X_1,\dots,X_{d_X},Y)$ the joint distribution of sources and target, and as $P(X_i, Y)$ the pairwise marginal distribution between $X_i$ and $Y$. We refer to \textit{lower-order} information as the portion of the dependency structure between sources and target that is captured by pairwise marginals. %
Conversely, \textit{higher-order} information encompasses all dependencies that cannot be reduced to pairwise relationships, such as those involving three or more variables jointly.

A well-known metric to assess lower-order information is the \emph{union information}, which is defined as
    \begin{align} \label{eq:broja}
        I^\cup(\bX; Y) \coloneqq \min_{Q\in\Delta_P} I_Q(\bX; Y) ~ ,
    \end{align}
where $\Delta_P \defeq \{Q\in\Delta : Q(X_i, Y) = P(X_i, Y) ~ \forall i\}$ contains the probability distributions on $(\bX,Y)$ with same pairwise marginals as $P$, $\Delta$ is the full set of all probability distributions on $(\bX,Y)$, and $I_Q$ denotes the mutual information calculated on the distribution $Q$. 
Analogously to the notion of the union operation in set theory, the union information %
can be conceived as capturing the ``union'' of all the pieces of information that each source $X_i$ has about the target $Y$~\cite{bertschinger2014quantifying,griffith2014quantifying}.

Both Bertschinger \textit{et al.}~\cite{bertschinger2014quantifying} and Griffith and Koch~\cite{griffith2014quantifying} introduced $I^\cup$ with the aim of establishing a decomposition of the mutual information $I(\bX; Y)$ (known as BROJA-PID~\cite{james2018dit}, but also sometimes referred to as $\sim$-PID~\cite{venkatesh2023gaussian}) that is grounded in operational principles from game theory~\cite{bertschinger2014quantifying} and network information theory~\cite{tian2025broadcast}.
That said, the union information is an interesting measure of lower-order information in its own right~\cite{kolchinsky2022novel}.
Computationally, calculating the minimum in Eq.~\eqref{eq:broja} for arbitrary distributions is highly non-trivial, with multiple implementations available~\cite{james2018dit,makkeh2018broja,wollstadt2018idtxl,pakman2021estimating}. Conveniently, calculation in Gaussian systems is tractable, and efficient gradient-based methods are available~\cite{venkatesh2023gaussian}.

\subsection{Defining union information in multivariate systems} \label{sec:results_math}

Let's now investigate how to generalise the union information to make it applicable for analysing multivariate dynamical systems -- such as the joint activity of multiple neuronal populations. %
For this, let's consider $\bX=\{X_1, ..., X_{d_X}\}$ source variables and $\bY=\{Y_1, ..., Y_{d_Y}\}$ target variables jointly following a probability distribution $P\in\Delta$.
Our approach is to specify desiderata for the properties of the lower- and higher-order information shared between sources and targets, and then to determine the functional forms that satisfy these criteria. 
Following this rationale, we first adopt the inclusion-exclusion principle \cite{williams2010nonnegative, kolchinsky2022novel}, assuming that the higher-order information can be obtained as the difference between the mutual information between sources and targets and the lower-order information. Then, we propose the following additional properties:
    
    \begin{assumption}[Pairwise dependence]
        \label{ass:low}
        Lower-order information should only depend on the pairwise marginal distributions $P(X_i, Y_j)$.
    \end{assumption}
    
    \begin{assumption}[Non-negativity]
        \label{ass:nonneg}
        The amount of lower- and higher-order information cannot be negative.
    \end{assumption}
    
Assumption~\ref{ass:low} formalises the intuition that lower-order information consists of the ``union'' of all the pieces of information that any $X_i$ holds about any $Y_j$. This requirement is theoretically supported from considerations on probability mass exclusion~\cite{finn2018pointwise}, and follows from operational considerations from game theory~\cite{bertschinger2014quantifying} and information network theory~\cite{tian2025broadcast}. This condition is also satisfied by many (though not all) commonly used measures of information decomposition~\cite{williams2010nonnegative,harder2013bivariate,bertschinger2014quantifying,barrett2015exploration, liardi2026mathematical}.
On the other hand, Assumption~\ref{ass:nonneg} is a desirable property of information decompositions in general, aiding their interpretability by allowing an understanding both in set-theoretic terms~\cite{kolchinsky2022novel, down2025algebraic}, and in the classical communication-theoretic sense\footnote{Nonetheless, several measures do not satisfy this condition~\cite{ince2017measuring,rosas2020operational} but are also theoretically justified~\cite{finn2018pointwise}.}~\cite{james2018uniquekey}. 
Unfortunately, we will show that Assumptions~\ref{ass:low}-\ref{ass:nonneg} are not sufficient to uniquely define a functional expression for these quantities. Hence, we introduce the following additional condition:

\begin{condition}[Existence]
    \label{cond:existence}
    For a given $P\in\Delta$, there exists a $Q^*$ with  ${Q^*(X_i,Y_j)=P(X_i, Y_j)}$ that has no higher-order information.
\end{condition}

Condition~\ref{cond:existence} formalises the intuitive idea that, for a system $P$, the higher-order dependencies can be arbitrarily reduced by acting solely on the joint distribution, while leaving the pairwise relationships unaffected.
In other words, it asserts that for any $P$ there exists a distribution $Q^*$ in $\Delta_P$ whose statistical structure can be fully captured by pairwise relations alone, i.e.\ by the lower-order dependencies.
However, formally establishing that this condition holds for an arbitrary system is non-trivial~\cite{bertschinger2014quantifying}. %

Building on these considerations, we now introduce two information-theoretic quantities as candidate measures of lower- and higher-order information. \\
\begin{definition} \label{def:wm}
    Given two random vectors $\bX$ and $\bY$, the \W- and \Minfo are defined as:
    \begin{align} \label{eq:Winfo_def_main}
        \W(\bX; \bY) &:= \min_{Q\in\Delta_P} I_Q(\bX;\bY) \,, 
        \\ \label{eq:Minfo_def}
        \M(\bX; \bY) &:= I(\bX; \bY) - \W(\bX; \bY) \,,
    \end{align}
    with
    $\Delta_P \defeq \{Q\in\Delta : Q(X_i, Y_j) = P(X_i, Y_j) ~ \forall i,j\} \,$\ifbool{inappendix}{}{%
      \footnote{With a slight abuse of notation, we use the same symbol $\Delta_P$ as in Eq.~\eqref{eq:broja}. We consider this justified, as both definitions are the same (and equal to that of Refs.~\cite{bertschinger2014quantifying,griffith2014quantifying}) when $d_Y=1$.}%
    }.
\end{definition}

Eq.~\eqref{eq:Winfo_def_main} represents a multivariate extension of the union information of Eq.~\eqref{eq:broja}, which we thus informally call ``double union'' information and denote as \W (corresponding to two union signs).
With these definitions at hand, our first result is to show that, under Assumptions~\ref{ass:low}-\ref{ass:nonneg} and Condition~\ref{cond:existence}, $\mathcal{W}$ and $\mathcal{M}$ are precisely the lower- and higher-order information. \\

\begin{restatable}{proposition}{propWdef} \label{prop:w}
    Under Assumptions~\ref{ass:low}-\ref{ass:nonneg} and Condition~\ref{cond:existence}, the \W- and \Minfo are the unique quantities that measure the lower- and higher-order information of a system, respectively.    
    If Condition~\ref{cond:existence} is not satisfied, then \W and \M become upper- and lower-bounds to the lower- and higher-order information.
\end{restatable}

Hence, together, the expressions of \W- and \Minfo above provide a principled way of decomposing the information shared between two random vectors into lower-order (i.e.\ pairwise) and higher-order (i.e.\ beyond-pairwise) components. Moreover, this approach is particularly convenient for the study of time series and dynamical systems, for which there is a natural interpretation: the source $\bX$ corresponds to the past state of the system, and $\bY$ to the future state.

\subsection{Calculating union and higher-order information in Gaussian systems}
\label{sec:results_implem}

While the minimisation in Eq.~\eqref{eq:Winfo_def_main} is in general challenging, restricting the analysis to jointly Gaussian $\bX,\bY$ allows us to prove our main theorem, which enables us to design an efficient algorithm for computing \W: \\

\begin{restatable}{theorem}{theoConvex} \label{theo:convex}
The minimisation of Eq.~\eqref{eq:Winfo_def_main} for the computation of \Winfo can be solved via convex optimisation, and admits a unique solution $Q^*\in\Delta_P$.
\end{restatable}

Thanks to Theorem~\ref{theo:convex}, the \Winfo can be efficiently calculated using standard tools from convex optimization~\cite{boyd2004convex, marshall1979inequalities}. 
In the following, we provide an efficient unconstrained parametrisation for the feasible set $\Delta_P$ and perform the optimisation for \Winfo. A complete technical description is provided in the Appendix.

Consider the joint Gaussian distribution $P(\bX, \bY)$ with zero mean and full-rank covariance $\Sigma$, and let $L$ be its Cholesky decomposition, i.e.\ ${\Sigma = L L^\top}$. Using a block matrix representation, we can write:
\begin{equation} \label{eq:cholblock_double}
    \Sigma 
     = \begin{pmatrix} \Sigma_{XX} & \Sigma_{XY} \\ \Sigma_{YX} & \Sigma_{YY} \end{pmatrix}  = L L^\top 
    = \begin{pmatrix} L_{XX} & 0 \\ L_{YX} & L_{YY} \end{pmatrix}
      \begin{pmatrix} L_{XX}^\top & L_{YX}^\top \\ 0 & L_{YY}^\top \end{pmatrix} 
\end{equation}
where $\Sigma_{XY} = \Sigma_{YX}^\top$, $L_{XX}$ and $L_{YY}$ are lower-triangular diagonal-positive matrices, and $L_{XY}$ is any real matrix.

The constraint defining the optimisation set $\Delta_P$ requires that $\Sigma^Q_{YX} = \Sigma_{YX}$, with $Q$ denoting the covariance of the optimised distribution.
Enforcing this condition in Eq.~\eqref{eq:cholblock_double}, we obtain that the covariance matrix of a distribution $Q \in \Delta_P$ can be written as
\begin{align} \label{eq:chol_delta_P}
    \Sigma^Q = L^Q L^{Q,\top} ~ ,\quad \text{with}\quad L^Q = \begin{pmatrix} L^Q_{XX} & 0 \\ \Sigma_{YX} (L^{Q,\top}_{XX})^{-1  } & L^Q_{YY} \end{pmatrix} \,,
\end{align}
where $L_{XX}^Q$ and $L_{YY}^Q$ are any lower triangular matrices\footnote{The constraints in $\Delta_P$ also require that $\mathrm{diag}(\Sigma^P)=\mathrm{diag}(\Sigma^Q)$. We address this in the Appendix.}.
Finally, we can parametrise $L_{XX}^Q$ and $L_{YY}^Q$ using a vector of real numbers, employing techniques from Ref.~\cite{leger2023parametrization}.

Thus, following this procedure, the original constrained optimisation over the space of covariance matrices is reformulated as an unconstrained problem over the real numbers. Since all mappings involved are differentiable, the optimisation can be performed using standard gradient-based methods. 
In this work, all optimisations were performed using Adam~\cite{kingma2014adam}.

We note that this approach assumes the optimising solution is jointly Gaussian, therefore yielding, in principle, only an upper bound on the true minimising solution. However, as shown in the Appendix, the joint Gaussian solution is exact for certain systems and constitutes a tight upper bound in more general cases.

\section{Results} \label{sec:results}

\subsection{Validation on synthetic data} \label{sec:results_synthetic}

\textbf{Experiment 1: Toy systems.}
We start by calculating $\W(\bZ_t; \bZ_{t+1})$ and $\M(\bZ_t; \bZ_{t+1})$ for bivariate systems $\bZ_t=(Z_{t}^1, Z_{t}^2)$ for which we have clear intuitions. We consider three scenarios: a ``COPY'' system, where $Z_{t}^i$ is strongly coupled to $Z_{t+1}^i$; a ``transfer'' system, where $Z_{t}^i$ is strongly coupled to $Z_{t+1}^j$ ($i\neq j$); and a Gaussian analogue to the XOR gate, which we expect to show higher-order information (Fig.~\figsubref{fig:synthetic_res}{a}; see Appendix for a full description).

Results show that the decomposition of the input-output information is effectively decomposed into $\M$ and $\W$, with the \Minfo only assigning higher-order components to the XOR gate. We remark that the small (unexpected) contributions of $\M$- and $\W$-information in COPY and XOR, respectively, are caused by numerical artefacts due to the input covariances being close to singular. 

\begin{figure}[t]
  \centering
  \includegraphics[width=\textwidth]{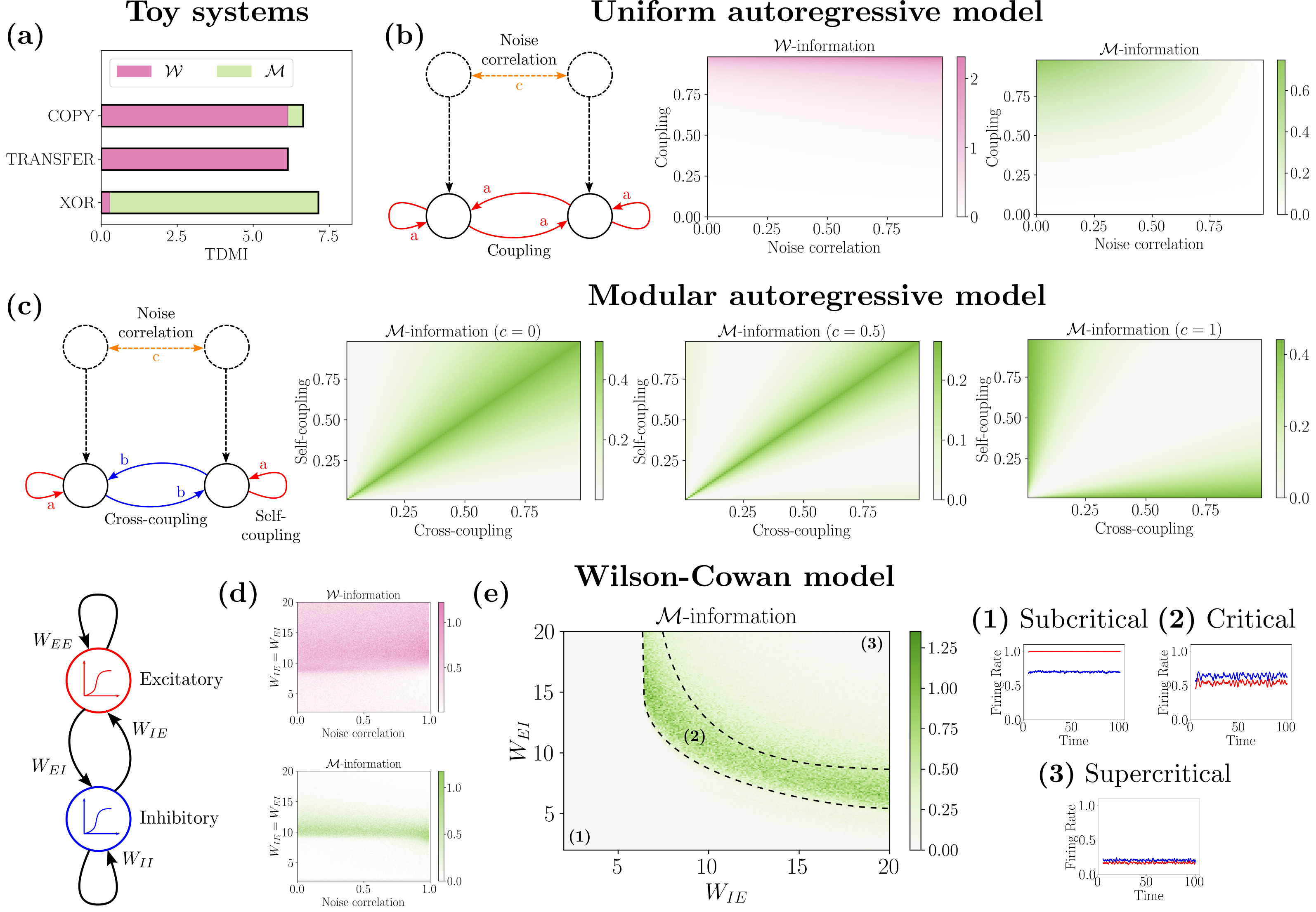}
  \caption{\textbf{\Minfo captures higher-order structures in linear and non-linear systems.}
  \textbf{(a)} Computation of \W- and \Minfo on COPY, Transfer, and XOR Gaussian systems shows that, as expected, only the XOR contains higher-order information.
  \textbf{(b)} The uniform VAR model of Eq.~\eqref{eq:VAR_eq1} exhibits the highest \Winfo for high coupling and noise correlation $(a\to1,\,c\to1)$, and the highest \Minfo for large coupling and low noise correlation $(a\to1,\,c\to0)$.
  \textbf{(c)} For a VAR with modular interaction terms (Eq.~\eqref{eq:VAR_eq2}), \Minfo is maximum for equal coefficients when the noise correlation is low $(a\approx b,\,c\lesssim0.5)$, and for uneven coefficients when the noise correlation is high.
  \textbf{(d)} In the Wilson-Cowan model, the estimation of \W and \Winfo is robust to the injection of noise correlation in the system.
  \textbf{(e)} \Minfo identifies 3 distinct regions in the phase space of the Wilson-Cowan dynamics, being highest in the critical regime.}  
  \label{fig:synthetic_res}
\end{figure}

\textbf{Experiment 2: Uniform autoregressive process.}
We continue with the validation of the method by considering different Vector Autoregressive (VAR) systems on which we calculate \W- and \Minfo. 
Specifically, given a VAR model
\begin{align} \label{eq:var_eq_gen}
    \bZ_{t+1} = A\bZ_t + \boldsymbol{\varepsilon}_t \,, 
\end{align}
with $\bZ=(Z^1, ..., Z^n)$, $\boldsymbol{\varepsilon}_t\sim\mathcal{N}(0,V)$, setting different parameters $A,V$ allows us to study how different dynamics are captured by \W. Here we show the case $n=2$, but similar results hold for larger systems (see Appendix).  
As above, we calculate $\W(\bZ_t; \bZ_{t+1})$ and $\M(\bZ_t; \bZ_{t+1})$, effectively analysing how information is integrated over time by the dynamics of the system. 
We consider a VAR model with parameters
\begin{equation} \label{eq:VAR_eq1}
    A = \frac{1}{2}
    \begin{pmatrix}
    a & a \\ a & a     
    \end{pmatrix}
    \quad \text{and} \quad
    V = 
    \begin{pmatrix}
    1 & c \\ c & 1     
    \end{pmatrix} \,,
\end{equation}
with $a,c\in(0,1)$,
so that the spectral radius of $A$ is 1 when $a=1$. 

Calculating $\M$- and $\W$-information for various values of $(a,c)$ yields interpretable results (Fig.~\figsubref{fig:synthetic_res}{b}): \W grows as $c$ approaches 1 and the system is more redundant; and \M is highest when all system elements are tightly coupled ($a\to1$) and the noise is uncorrelated ($c\to0$), effectively capturing dynamical cooperation not mediated by noise. \\

\textbf{Experiment 3: Modular autoregressive process.} To further study the interplay between dynamics and \Minfo, we consider a VAR of the form:
\begin{equation} \label{eq:VAR_eq2}
    A = 
    \begin{pmatrix}
    a & b \\ b & a     
    \end{pmatrix}
    \quad \text{and} \quad
    V = 
    \begin{pmatrix}
    1 & c \\ c & 1     
    \end{pmatrix} \,,
\end{equation}
with $a,b,c\in(0,1)$. 
Fixing the parameter $c$, we perform an analogous analysis as before by varying $a,b$ and calculating $\M$ and $\W$. To facilitate comparison across the different systems, for each pair of $(a,b)$ we rescale the spectral radius of $A$ so that the system always possesses the same mutual information $I(\bZ_t; \bZ_{t+1})$. 
This ensures that differences in \M- and \W-information are only due to the internal dynamical structure of the model, and not to the total information content~\cite{liardi2024null}. 

Results show that $\M$ is larger for low $c$ only when both self- and cross-coupling are high, as neither alone explains the system's dynamics (Fig.~\figsubref{fig:synthetic_res}{c}). Interestingly, this pattern is reversed for high $c$, plausibly because the asymmetry in the interactions (either high self and low cross-coupling, or vice versa) overcomes the effect of noise that redundantly correlates the sources. \\

\textbf{Comparison with existing information-theoretic quantities.}
Before further validating our measure on more elaborate synthetic systems, we examine how \W and \M relate to other existing information-theoretic measures of lower- and higher-order dependencies. In particular, we focus on the behaviour of the O-information \cite{rosas2019quantifying}, its extension to time-evolving processes (O-information rate, OIR) \cite{faes2022new}, and the Redundancy-Synergy Index (RSI) \cite{chechik2001group}. We refer to the Appendix for their mathematical definition. 
We compute these quantities on the uniform and modular autoregressive models introduced in the previous paragraphs, as well as on an additional lagged system with multiple past time steps (see Appendix). For comparison, all quantities are computed on the full covariance matrix of the system. The O-information is thus evaluated over the joint distribution of past and future states, whereas OIR, RSI, \W, and \Minfo are calculated between past and future states.

Across most regimes, these measures exhibit qualitatively similar trends. Nevertheless, while \W and \M respond to changes in both noise correlation and self- and cross-coupling, O-information and RSI sometimes fail to differentiate between values of $c$ (with an average relative change across $c$ an order of magnitude smaller than that of OIR and \W-\M), and OIR struggles to detect variations in $a$ (average relative change smaller than $2\%$). Moreover, these measures capture only the balance between higher- and lower-order information, whereas \W and \M provide explicit estimates of both contributions. 
Finally, we remark that O-information is not ideally suited for the study of dynamical processes, as it does not distinguish between correlations arising from the steady-state distribution and those emerging from temporal evolution. In contrast, the directed metrics \W and \M, OIR, and RSI can account for temporal structure by defining $\bX$ as the system's past state and $\bY$ as its future state, thereby incorporating dynamical dependencies into the analysis.
The key findings of this investigation are summarised in Table~\ref{tab:measures_comparison}, with the detailed results and additional discussion reported in the Appendix.

\begin{table}[t]
\centering
\setlength{\tabcolsep}{7pt} %
\renewcommand{\arraystretch}{1.1} %
\caption{\textbf{\W- and \Minfo satisfy more useful properties than alternative measures for analysing dynamical systems.} Summary of comparison between O-information, O-information rate (OIR), Redundancy-Synergy Index (RSI) and \W and \M on Uniform, Modular, and Lagged autoregressive models. See Appendix for details.}
\label{tab:measures_comparison}
\begin{tabular}{ccccc}
\toprule
Measure & \makecell{Sensitive to\\coupling strength} 
        & \makecell{Sensitive to\\noise correlation} 
        & \makecell{Distinguishes steady-state\\vs dynamic dependencies} 
        & \makecell{Separates higher-\\and lower-order} \\
\midrule
O-information & \V & \X & \X & \X \\
OIR          & \X  & \V & \V & \X \\
RSI          & \V & \X & \V & \X \\
\W, \M       & \V & \V & \V & \V \\
\bottomrule
\end{tabular}
\end{table}

Note that these conclusions may not (and were not intended to) generalise to other complex systems, as the behaviour of each measure depends on the specific system and parameters investigated. Nonetheless, these findings suggest that \W and \M can offer greater sensitivity to dynamical parameters while disentangling lower- and higher-order interdependencies, hence addressing the desiderata (3) and (4). \\

\begin{figure}[b]
    \centering
    \includegraphics[width=\linewidth]{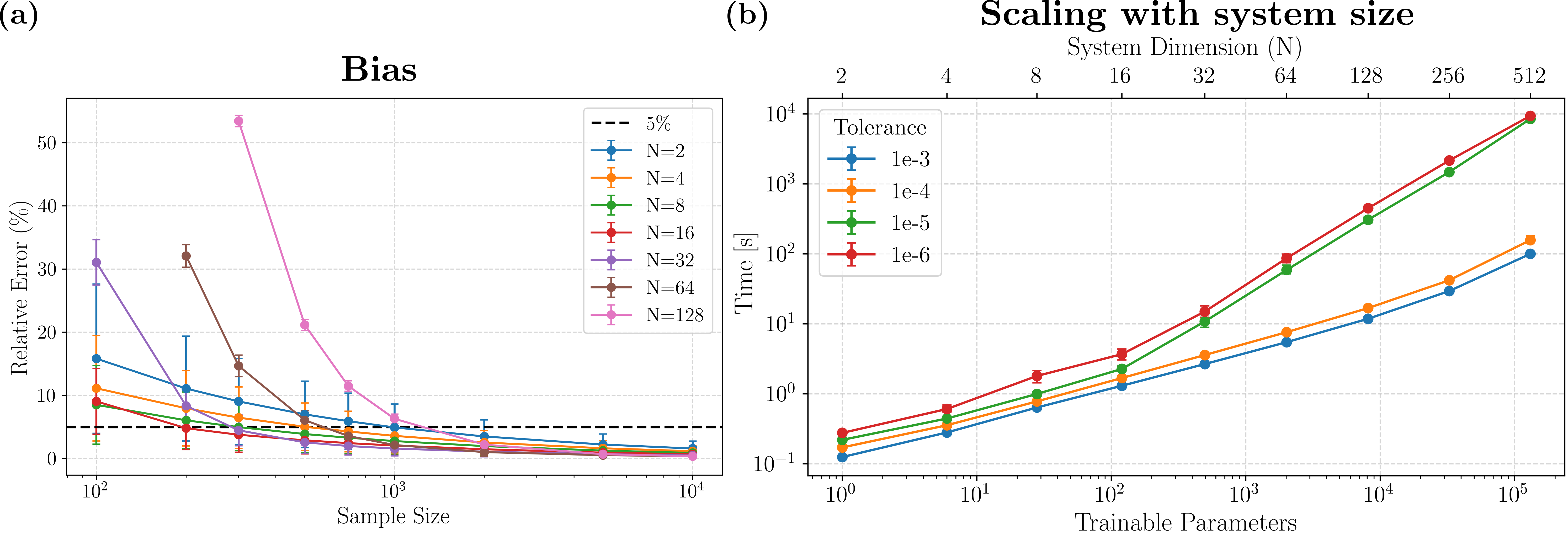}
    \caption{\textbf{The estimation of \Winfo is accurate to within a few percentage points, and its computation time scales efficiently with system size.} \textbf{(a)} Relative percentage accuracy of \Winfo for different sample sizes and system dimensions. \textbf{(b)} Runtime as a function of trainable parameters of the system and system dimensions. Tolerances specify the convergence thresholds used in the optimisation. Error bars indicate the SEM.}
    \label{fig:acc_time_tests}
\end{figure}

\textbf{Experiment 4: Non-linear neural dynamics.}
Although the algorithm presented above was derived for Gaussian variables, we show that, in practice, our estimation of higher-order interdependencies can be useful in more general (non-linear) cases.  
As an example, we consider a Wilson-Cowan (WC) model of interacting excitatory and inhibitory neuronal populations with non-linear gain functions, and analyse its information dynamics in different dynamical regimes.

We estimate the \Minfo by fitting a Gaussian copula to the simulated steady-state distributions of neuronal activity, and employ the optimisation technique presented in Sec.~\ref{sec:results_implem} across a range of excitatory-inhibitory $(W_{EI})$ and inhibitory-excitatory $(W_{IE})$ weights, as well as different values of the noise correlation $c$.  We observe that \Minfo delineates a specific region of higher-order dependencies, which interestingly corresponds to the critical regime of sustained periodic oscillations (Fig.~\figsubref{fig:synthetic_res}{e}). Furthermore, sweeping over various values of $c$ shows that the \Minfo is robust to noise (Fig.~\figsubref{fig:synthetic_res}{d}). In sum, we show that \Minfo is sensitive to biophysically meaningful modes of non-linear neural activity.

\begin{figure}[t]
    \centering
    \includegraphics[width=1\linewidth]{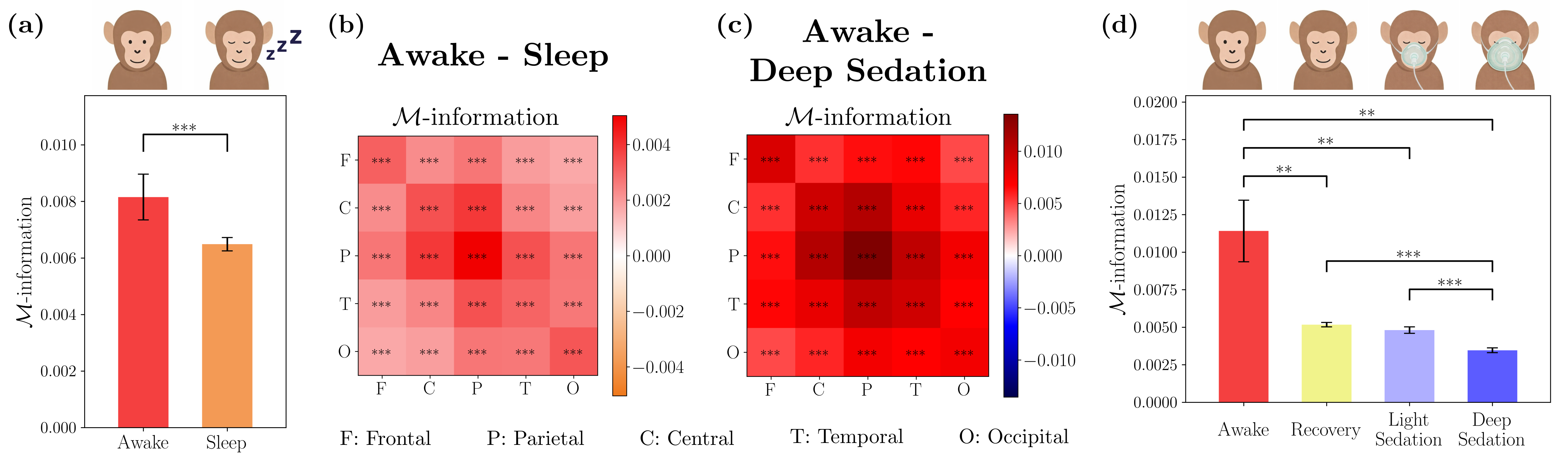}
    \caption{\textbf{\Minfo tracks level of consciousness in monkeys during awake, sleep, and sedated states.} \textbf{(a)} Average of \Minfo across all pairs of electrodes in awake and sleep conditions. \textbf{(b)} Difference of \Minfo between awake and sleep conditions averaged across all pairs belonging to the five major cortical areas (Frontal, Parietal, Central, Temporal and Occipital). \textbf{(c)} same as (b) but for awake and deep anaesthesia. \textbf{(d)} same as (a) but for awake, recovery, light and deep anaesthesia.
    All the \Minfo values above are normalised w.r.t.\ the mutual information. (P-values Holm-Bonferroni corrected calculated with a one-sample t-test against the zero-mean null hypothesis. *: $p<0.05$; **: $p<0.01$; ***: $p<0.001$).}
    \label{fig:monkey}
\end{figure}

\subsection{Bias correction and scalability} \label{sec:results_synthetic_stress}

We now study how the accuracy and runtime of our method scale with system size.
We test the former by sampling a random covariance matrix from a Wishart distribution, generating data of arbitrary length $L$, and estimating the covariance of the system again. Finally, we calculate \Winfo and apply the bias correction suggested by Venkatesh \textit{et al.}~\cite{venkatesh2023gaussian} to account for the finite sample sizes (Fig.~\figsubref{fig:acc_time_tests}{a}). 
For smaller systems, the estimation error swiftly falls beneath the 5\% threshold, and even for larger systems, the bias remains within an acceptable range with approximately 1,000 samples. This makes it feasible to obtain accurate results on data such as fMRI~\cite{van2012human}, which typically consists of a few hundred time points and is known to pose challenges due to short time series and poor temporal resolution.
Finally, we notice that \Winfo calculated on a sample size of 10,000 provides an error below 1\% for all system sizes studied here.

We then test the computational speed of our method on randomly sampled covariance matrices from Wishart distributions, showing that the optimisation scales efficiently with the system size and sub-linearly with the trainable parameters of the model (Fig.~\figsubref{fig:acc_time_tests}{b}).
Overall, we find that the estimation of \Winfo -- and therefore of \Minfo -- is accurate, robust to different sampling sizes, and computationally efficient.

\subsection{Empirical applications}
\label{sec:results_empirical}

\textbf{Loss of consciousness in the macaque brain.}
We now present an application of \W and \M to real-world electrophysiology data.  
Specifically, we apply \W- and \Minfo to ECoG recordings from a macaque during different physiologically- and pharmacologically-induced states of consciousness \cite{yanagawa2013large}. 
The dataset comprises 128-channel ECoG recordings from the right hemisphere of a monkey sitting calmly with its head and arms restrained. The experimental conditions include wakefulness, natural sleep, and light and deep sedation with the general anaesthetic drug propofol. 

We first compute \W- and \Minfo for all electrode pairs and then average across them to obtain global values, finding that \Minfo is consistently higher during wakefulness than during sleep or sedation (\figsubref{fig:monkey}{a}–\figsubref{fig:monkey}{d}). Moreover, we observe that this behaviour is spatially homogeneous, with the five major cortical areas being modulated comparably across states (\figsubref{fig:monkey}{b}-\figsubref{fig:monkey}{c}). Thus, \Minfo reveals that reduced conscious level coincides with reduced higher-order information, aligning with previous results in the literature~\cite{imperatori2021cross, schartner2015complexity} and validating the interpretability of the framework on real-world data. 
Additionally, these results not only reproduce but also extend the findings previously obtained using OIR on the same dataset \cite{faes2022new}. In fact, \Winfo shows that there is no significant difference in the lower-order dependencies across conditions (Appendix). Therefore, \W and \M collectively suggest that the previously reported increase in the balance between higher- and lower-order interdependencies is driven by an increase of the former, rather than by a reduction of the latter.

\textbf{Visual decision-making in the mouse brain.}
We finally apply our measures to a neuroimaging dataset of mice performing a visual discrimination task~\cite{steinmetz2019distributed}. The dataset includes high-frequency local field potential (LFP) recordings from different brain regions while contrasting visual stimuli were presented on either side of a screen. Mice were trained to rotate a wheel to bring the high-contrast image towards the centre of the screen, with trials classified as positive (correct) or negative (incorrect) based on the accuracy of the choice. In certain trials where stimuli with equal contrast were presented on both sides, mice were randomly rewarded, and we refer to these trials as uncertain trials. Finally, as a control for stimulus presence, passive trials were performed where visual stimuli were presented with no instruction or reward.
For each mouse and experimental session, multiple Neuropixel probes were placed across the brain to provide broad coverage of cortical and subcortical structures. 
We employed \Minfo to assess the presence of higher-order structures across trials, brain regions, and throughout the recording session. Since \Winfo provides analogous results but from the lower-order perspective, we report those findings in the Appendix. 

\begin{figure}[t]
    \centering
    \includegraphics[width=\linewidth]{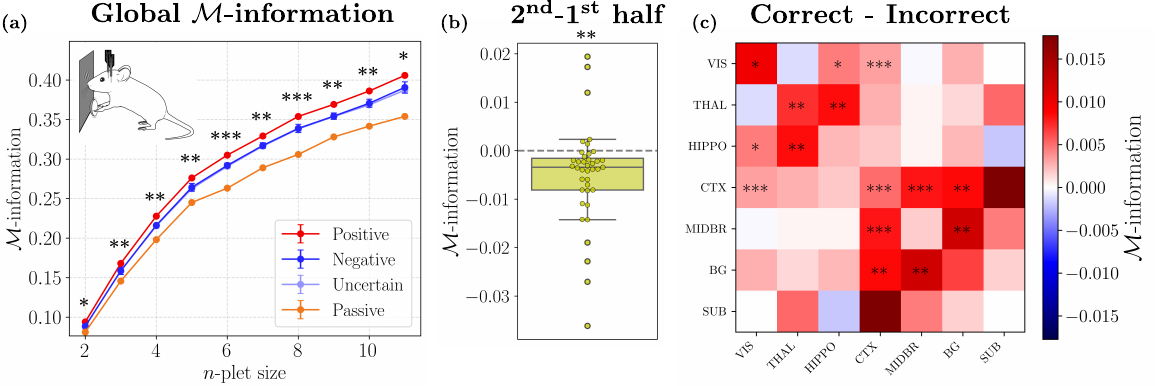}
    \caption{\textbf{\Minfo in mouse neural activity captures perceptual performance in a visual discrimination task} \textbf{(a)} \Minfo of different trials across different $n$-plets sizes. Significance tests were performed between positive and negative trials. Error bars indicate the SEM. \textbf{(b)} Difference of \Minfo between the second and first half of all active trials. \textbf{(c)} Difference of \Minfo between correct and incorrect trials per regional interaction. All the \Minfo values above are normalised w.r.t.\ to the mutual information. (P-values calculated with a one-sample t-test against the zero-mean null hypothesis. *: $p<0.05$; **: $p<0.01$; ***: $p<0.001$).}
    \label{fig:Neuro}
\end{figure}

We first apply \Minfo to $n-$plets of LFP probes, with $n\in[2,11]$ (depending upon the recording session) in each of the four conditions. We observe that population activity has the highest \Minfo when the mouse makes correct perceptual decisions, and the lowest when remaining passive. Stronger effects were obtained for larger $n-$plets, underscoring the importance of multivariate measures for accurately discriminating system behaviour (Fig.~\figsubref{fig:Neuro}{a}). 
By computing \Minfo on the first and second half of the trials in each run for each pair of pixels, we observe a decrease in $\M$ over time (Fig.~\figsubref{fig:Neuro}{b}). This suggests that perceptual performance is initially subserved by distributed cooperation via macroscopic information processing, but once the mouse has accommodated to the task, computation is offloaded to lower-order elements.

To further investigate the biological orchestration of visual perception, we examine potential higher-order interactions within and between specific anatomical areas (Fig.~\figsubref{fig:Neuro}{c}). Comparing correct and incorrect perceptual decisions, we find a profound effect of increased higher-order interactions within the neocortex, as well as between the neocortex and midbrain, basal ganglia, and visual cortex, respectively. We also observe significant increases for hippocampo-visual and thalamo-hippocampal interactions, likely reflecting the successful coordination of sensory variables with contextual memory and task-relevant information. Intriguingly, however, we do not find a significant effect for increased thalamo-visual interactions, which may be a consequence of the functional and cytoarchitectural heterogeneity of thalamic nuclei (see Appendix). Notably, we also observe greater \Minfo within the visual cortex, as well as within the thalamus, indicating they function more as wholes, rather than parts, during perceptual processing. In conclusion, \Minfo reveals novel effects for the informational organisation of mammalian brains encoding behaviourally relevant variables.

\section{Information Decomposition based on \Winfo} \label{sec:broja-phiid}

Having established that \W- and \Minfo are meaningful measures in their own right, we finally show that the methodology developed above can be integrated with existing information-theoretic frameworks.

Part of the original motivation for the works in Refs.~\cite{bertschinger2014quantifying,griffith2014quantifying} was to decompose mutual information 
using the Partial Information Decomposition (PID) framework~\cite{williams2010nonnegative}. PID aims to separate the total information the sources $\bX$ have about the target $Y$ into redundant, unique, and synergistic components. This allows one to disentangle the amount of information shared by all sources, unique to each of them, and that arises only from their joint cooperation, respectively.   
In the case of two sources $X_1,X_2$ and a target $Y$, PID states that marginal and joint mutual information can be written as
\begin{subequations} \label{PID_eqs}
\begin{align}
    I(X_1;Y) &= R + U_1 \label{PID_eqs:a} \\
    I(X_2;Y) &= R + U_2 \label{PID_eqs:b} \\
    I(X_1,X_2;Y) &= R + U_1 + U_2 + S \label{PID_eqs:c} \,,
\end{align}
\end{subequations}
where $R$ represents redundancy, $U_1, U_2$ unique information of $X_1,X_2$, respectively, and $S$ synergy. 
Importantly, these equations do not define a unique functional form for $R$, $U_i$ and $S$, and many different approaches have been proposed to define these contributions~\cite{williams2010nonnegative, bertschinger2013shared, griffith2014quantifying, harder2013bivariate, barrett2015exploration, ince2017measuring, james2018unique, griffith2014intersection, griffith2015quantifying, quax2017quantifying, rosas2020operational, liardi2026mathematical}.

\begin{figure}
    \centering
    \includegraphics[width=1\linewidth]{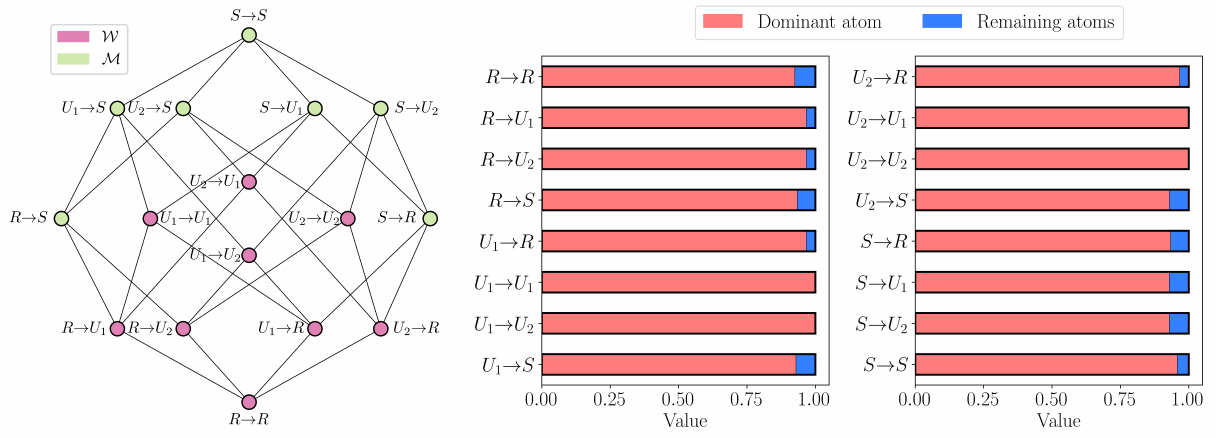}
    \caption{\textbf{BROJA-\phiid successfully characterises information flow in toy systems.} \textbf{(a)} Components of \phiid arranged in a lattice, with terms highlighted according to their role in \W- and \Minfo. \textbf{(b)} BROJA-\phiid results on toy systems which are expected to have one dominant atom (reported on the $y$ axis, red). The sum of all remaining atoms in each case is shown in blue.
    }
    \label{fig:dominant_tests_phiID}
\end{figure}

However, despite its flexibility, the PID formalism does not enable a decomposition of multivariate targets. A generalisation of PID to multiple targets has then been proposed under the name of Integrated Information Decomposition (\phiid) \cite{mediano2025toward}. 
Considering two bivariate random variables $\bX=(X_1,X_2)$ and $\bY=(Y_1,Y_2)$, \phiid aims to decompose the mutual information $I(\bX,\bY)$ into $R$, $U$, and $S$ components both in sources and targets, providing a finer-grained decomposition than PID. For instance, information that is redundant in the sources (i.e.\ between $X_1$ and $X_2$) and uniquely transferred to the target $Y_1$ would be denoted as $\rtx$. Alternatively, information contained in only source $X_2$, which then becomes synergistic in the targets $Y_1$ and $Y_2$, is represented by \xts. A similar reasoning applies to all other possible combinations of $R$, $U$, and $S$, giving birth to sixteen information modes~\cite{mediano2025toward}.

Calculating \phiid requires both a choice of PID and an additional condition. For the former, we adopt the BROJA-PID~\cite{bertschinger2014quantifying}; for the latter, we propose a similar reasoning to that of Bertschinger \textit{et al.}~\cite{bertschinger2014quantifying}, arguing that all the lower-order components should depend solely on redundancy and unique information terms. Thus, they should be included in the \Winfo. This provides the following additional constraint:
\begin{equation} \label{eq:W_phiid}
\begin{split}
    \W =& \rtr + \rtx + \rty +\\ &\xtr + \xtx + \xty +\\ &\ytr + \ytx + \yty \,.
    \end{split}
\end{equation}

Hence, with Eq.~\eqref{eq:W_phiid} and the BROJA-PID at hand, we can perform the full \phiid decomposition of the system. We name this approach BROJA-\phiid.
We report a more detailed mathematical explanation in the Appendix.

To validate our proposed decomposition, we analyse toy models similar to those considered in Sec.~\ref{sec:results}, but designed to maximise a specific \phiid atom (Fig.~\ref{fig:dominant_tests_phiID}). Indeed, we find that in all cases, BROJA-\phiid assigns the dominant contribution to the correct atom, proving its ability to discriminate among different types of integration dynamics. Hence, not only is it possible to assess the total higher-order information via \Minfo, but also to uncover which specific information atoms are responsible for it via BROJA-\phiid.

\section{Related work} \label{sec:rel_work}

A great number of higher-order information measures exist in the literature, which include (among others) the interaction information~\cite{gat1998synergy, brenner2000synergy, schneidman2003synergy, anastassiou2007computational}, the Delta-information~\cite{nirenberg2001retinal, latham2005synergy}, the O-information~\cite{rosas2019quantifying}, the redundancy-synergy index~\cite{chechik2001group}, Varadan's synergy~\cite{varadan2006computational}, information decomposition approaches~\cite{williams2010nonnegative}, and measures aimed at quantifying emergent behaviour~\cite{rosas_reconciling_2020, barnett2023dynamical, rosas2024software}. A separate, prominent line of work quantifying higher-order information in dynamical systems is integrated information theory~\cite{balduzzi2008integrated,oizumi2014phenomenology}, within which multiple measures are available~\cite{mediano2018measuring}, including some based on information decomposition~\cite{mediano2025toward} and information geometry~\cite{oizumi2016unified,langer2020complexity}. 

Among the many higher-order information measures, information decomposition approaches play a fundamental role due to their ability to disentangle distinct modes of information sharing. The seminal work of Williams and Beer introduced the framework of Partial Information Decomposition (PID), in which the mutual information between an arbitrary number of sources and a single target is divided into redundant, unique, and synergistic terms~\cite{williams2010nonnegative}. Following this work, many different PID formulations have been suggested ~\cite{williams2010nonnegative, bertschinger2013shared, griffith2014quantifying, harder2013bivariate, barrett2015exploration, ince2017measuring, james2018unique, griffith2014intersection, griffith2015quantifying, quax2017quantifying, rosas2020operational}, including the BROJA definition proposed contemporaneously by Bertschinger \textit{et al.}~\cite{bertschinger2014quantifying} and Griffith and Koch~\cite{griffith2014quantifying}, which forms the inspiration of this work. 
Ince introduced an entropy-based decomposition~\cite{ince2017partial, varley2023partial}, Varley a Kullback-Leibler divergence decomposition~\cite{varley2024generalized}, and Mediano \textit{et al.} generalised PID by decomposing both information from sources and targets symmetrically~\cite{mediano2025toward}. Unfortunately, however, application of these frameworks to large datasets is limited by the superexponential growth of information modes~\cite{jansma2025fast}.

Various algorithms have been previously introduced to improve the computation of PID quantities, as well as information quantities more broadly.
Makkeh \textit{et al.} proposed a refinement of the original BROJA-PID optimisation for the discrete case via cone programming~\cite{makkeh2018broja}, while Venkatesh \textit{et al.} improved it for Gaussian variables using projected gradient descent~\cite{venkatesh2023gaussian}. Kleinman \textit{et al.}~\cite{kleinman2021redundant} proposed a measure of redundant information based on the result of a learning problem.
Inspired by PID, McSharry \textit{et al.}~\cite{mcsharry2024learning,kaplanis2023learning} proposed a representation learning architecture to learn highly synergistic features from data.
More generally, there is a wide range of variational methods that turn the estimation of mutual information into an optimisation problem~\cite{poole2019variational}.

\section{Conclusion} \label{sec:conclusions}

We introduced the \W- and \Minfo, novel information-theoretic measures of lower- and higher-order information structure, tailored for use on large dynamical systems. 
In addition to formulating the calculation of \W as a convex optimisation problem, we designed a robust and efficient algorithm that enables practical applications to large, high-dimensional systems. 
We illustrated their behaviour on synthetic systems with Gaussian and non-linear dynamics, showing their robustness to noise and capability to index critical behaviour.
Finally, we employed \Minfo to analyse empirical brain activity data of macaques in different states of consciousness and mice during a visual discrimination task. Our results showed that higher values of \M reflect conscious levels and accurate perceptual performance, demonstrating that our proposed measures can detect meaningful phenomena in real-world scenarios. Moreover, as an additional theoretical contribution, we provided an information decomposition framework based on \W- and \Minfo, enabling a finer-grained analysis of multivariate information dynamics.

Although \Winfo is defined for any probability distribution, here we have focused only on the case where $\bX,\bY$ are jointly Gaussian, proving that it provides a tight bound on the true \W (Appendix). However, to compute \W in other scenarios and potentially make calculations in the Gaussian case faster, future work could design more efficient optimisers that better exploit the geometry of $\Delta_P$ via Riemannian machine learning~\cite{geoopt2020kochurov} or generalise to other distributions~\cite{pakman2021estimating}. Moreover, a limitation of this approach is the range of validity of Condition~\ref{cond:existence}, which warrants further investigation. In fact, when this condition does not hold, then \W and \Minfo become upper and lower bounds to the lower- and higher-order information, respectively. 
An additional potential limitation is the use of the inclusion–exclusion principle (IEP), whose application has been criticised~\cite{kolchinsky2022novel}. In our framework, however, the IEP is only invoked to establish the link between \W and lower-order dependencies, while the definition of \Minfo as a measure of higher-order information remains independent of it. A promising direction for future work is therefore to investigate \M as a standalone estimator of higher-order dependencies, disentangled from lower-order contributions.
Finally, future work can also focus on whether \W and \M can be used as objective functions for learning in deep learning systems~\cite{makkeh2025general}.

Taken together, our results demonstrate that \Minfo is a powerful estimator of higher-order information, grounded in solid mathematical and operational foundations, and whose efficient computational scalability enables broad applicability across diverse real-world domains, including neuroscience and beyond. %
By bridging principled information-theoretic foundations with practical feasibility, \Minfo opens the door to a new generation of tools for quantifying collective interactions in high-dimensional complex systems.

\section*{Acknowledgements}
We thank Maximilian Kathofer for useful discussions. The work of F.R. has been supported by the UK ARIA Safeguarded AI programme, the PIBBSS Affiliateship programme, and Open Philanthropy.

\section*{Code availability}
The Python code for calculating \W-, \Minfo, and BROJA-\phiid is publicly available and can be found at \url{https://github.com/alberto-liardi/wimfo}. 

\bibliographystyle{ieeetr}
\bibliography{refs.bib}

\newpage 

\setcounter{lemma}{0}
\setcounter{assumption}{0}
\setcounter{condition}{0}
\setcounter{definition}{0}

\appendix
\booltrue{inappendix}

\section{Optimisation algorithm for \Winfo} \label{app:optim_procedure}

Here we describe in full detail the optimisation procedure implemented for the calculation of \Winfo. 

Let's consider a joint system $(\bX, \bY)$ that follows a Gaussian distribution with full-rank covariance $\Sigma$.
We denote by $P$ the original distribution of the system, and by $Q$ the distribution being optimised. 
To differentiate between the two, we use superscripts: quantities referring to $Q$ will carry a superscript (e.g.\ $\Sigma^Q$), while those associated with the original distribution $P$ will be written without any (e.g.\ $\Sigma$).

Given the domain $\Delta_P$ that contains all probability distributions $Q$ such that $Q(X_i,Y_j) = P(X_i,Y_j)$, we remind that the objective of the optimisation is to find the distribution $Q^*\in\Delta_P$ that minimises the mutual information $I_Q(\bX,\bY)$. For Gaussian variables, this has the form:
\begin{equation} \label{eq:MI_gaussian_Q}
    \begin{split}
     I_Q(\bX,\bY) & = \dfrac12\log\left(\frac{\det(\Sigma_X^Q)\det(\Sigma_Y^Q)}{\det(\Sigma^Q)} \right) \,.
    \end{split}
\end{equation}
Therefore, we look for a convenient parametrisation of the covariance matrices of the probability distributions in $\Delta_P$.

We start from the full covariance of the original system, which can be written in blocks as:
\begin{align} \label{eq:full_cov_block}
    \Sigma = \begin{pmatrix} \Sigma_{XX} & \Sigma_{XY} \\ \Sigma_{YX} & \Sigma_{YY} \end{pmatrix} ~ ,
\end{align}
where $\Sigma_{XY} = \Sigma_{YX}^{\top}$. 
Without loss of generality, we assume that all $X_i$ and $Y_j$ are zero-mean and unit-variance, such that all diagonal elements of $\Sigma_{XX}$ and $\Sigma_{YY}$ are 1. This is possible because mutual information is invariant under isomorphisms in both arguments~\cite{cover1999elements}.

Instead of working directly on the correlation matrices, we consider their Cholesky decomposition. 
Every lower-triangular matrix $M$ with positive diagonal uniquely determines a symmetric positive definite matrix $S$ through the bijective mapping:
\begin{subequations} \label{eq:SM}
\begin{align}
    M \rightarrow S &: S = M M^\top \label{eq:SMa} \\
    S \rightarrow M &: M = \text{chol}(S) \label{eq:SMb} \,,
\end{align}
\end{subequations}
where $\text{chol}(S)$ is the Cholesky decomposition of $S$.

Hence, let $L = \text{chol}(\Sigma)$ be the Cholesky decomposition of the full system (i.e.\ $\Sigma = L L^\top$). We can write this in blocks:
\begin{align}
    \Sigma = L L^\top = \begin{pmatrix} L_{XX} & 0 \\ L_{YX} & L_{YY} \end{pmatrix} \begin{pmatrix} L_{XX}^\top & L_{YX}\top \\ 0 & L_{YY}^\top \end{pmatrix} = \begin{pmatrix} L_{XX} L_{XX}^\top & L_{XX}L_{YX}^\top \\ L_{YX}L_{XX}^\top & L_{YX}L_{YX}^\top + L_{YY}L_{YY}^\top \end{pmatrix}~ ,
    \label{eq:cholblock}
\end{align}
where $L_{XX}$ and $L_{YY}$ are lower-triangular diagonal-positive matrices, and $L_{XY}$ is any real matrix with entries $\in [-1,1]$. A similar structure holds for $\Sigma^Q$.

To incorporate the constraints of $\Delta_P$, we explicitly enforce that all the cross-covariances between the $X_i$'s and the $Y_j$'s are equal to those in the original system, i.e.\
\begin{align} \label{eq:offdiag_constraint}
    \Sigma^Q_{YX} = \Sigma_{YX} ~ .
\end{align}
Thus, comparing the off-diagonal terms of Eqs.~\eqref{eq:full_cov_block}-\eqref{eq:cholblock} and imposing Eq.\eqref{eq:offdiag_constraint}, we obtain:
\begin{align} \label{eq:chol_constraint}
    \Sigma^Q_{YX} = L^Q_{YX} L^{Q,\top}_{XX} = \Sigma_{YX} \implies
    L^{Q}_{YX} = \Sigma_{YX} \left(L^{Q,\top}_{XX}\right)^{-1} \,.
\end{align}
This suggests that any Cholesky matrix $L^Q$ of a distribution $Q\in\Delta_P$ must have the following block structure:
\begin{align} \label{eq:chol_delta_P_app}
    L^Q = \begin{pmatrix} L^Q_{XX} & 0 \\ \Sigma_{YX} (L^{Q,\top}_{XX})^{-1  } & L^Q_{YY} \end{pmatrix} \,.
\end{align}
There is one more constraint we need to satisfy: the fact that the diagonal of $\Sigma^Q$ must be equal to 1. We achieve this by employing parametrisation techniques from the Parametrization Cookbook (PC)~\cite{leger2023parametrization}. 

We begin with $\bX$. Since the covariance of $\bX$ is simply ${\Sigma^Q_{XX} = L_{XX}^Q L_{XX}^{Q,\top}}$, we can make sure the above constraint is satisfied by parametrising the space of correlation matrices in $\bX$~\cite[Sec.~2.4.5]{leger2023parametrization}.

Regarding $\bY$, its covariance is $\Sigma^Q_{YY} = L^Q_{YX} L^{Q,\top}_{YX} + L^Q_{YY}L^{Q,\top}_{YY}$. Since $L^Q_{XY}$ is determined by $L^Q_{XX}$ and $\Sigma_{YX}$, we want to parametrise $L^Q_{YY}$ such that the diagonal of this sum is 1. Let $\bm d = \diag(L^Q_{YX} L^{Q,\top}_{YX})$ be the vector of diagonal elements of the first summand above.
As per PC, if $M$ is a lower-triangular matrix and $S = M M^\top$, for $S_{ii}$ to be equal to $a$, the $i$-th row of $M$ must be a point in the half-sphere of radius $\sqrt{a}$. By inspecting the expression for $\Sigma^Q_{YY}$ above, we conclude that $L^Q_{YY}$ can be parametrised as a lower-triangular matrix where the $i$-th row is a point in the half-sphere of radius $\sqrt{1 - d_{i}}$.

Hence, this enables us to parametrise the optimisation set $\Delta_P$ by a correlation Cholesky factor, $L^Q_{XX}$, and a Cholesky factor with fixed diagonal, $L^Q_{YY}$. In turn, these can be easily mapped to a vector of real numbers~\cite{leger2023parametrization}.  

In conclusion, starting from a point in a high-dimensional real space, we map the real numbers to $L_{XX}^Q$ and $L_{YY}^Q$, calculate $L^Q$ and thus $\Sigma^Q$, and finally compute $I_Q(\bX; \bY)$. Since all these maps are differentiable, we can perform the optimisation using standard gradient-based approaches. For additional clarity, we provide in Algorithm~\ref{alg:W-minimisation} a synthetic pseudocode for the minimisation algorithm described above.

\begin{algorithm}[H]
\setstretch{1.35}
\caption{Estimation of $\mathcal{W}$: Minimisation of $I_Q(\bm X,\bm Y)$ using Adam}
\label{alg:W-minimisation}
\begin{algorithmic}[1]
\Require Correlation matrix $\Sigma(\bm{X}, \bm{Y}) \in \mathbb{R}^{n\times n}$
\Ensure Estimated value of $\mathcal{W}(\bm{X}; \bm{Y}) \in \mathbb{R}$
\State Compute Cholesky factorisation: $L \gets \text{chol}(\Sigma)$
\State Map $L$ to real parameters: $\bm \theta_0 = (\bm\theta_{0_x},\bm\theta_{0_y})^\top\gets f(L)$ \Comment{\cite[Sec. 2.4.5]{leger2023parametrization}}
\State Initialise parameters: $\bm \theta = (\bm\theta_{x},\bm\theta_{y})^\top\gets \bm \theta_0$
\Repeat
    \State Reconstruct $L^Q_{XX}$ from $\bm\theta_x$: $L^Q_{XX} \gets f^{-1}(\bm\theta_x)$
    \State Compute $L^Q_{YX}$: $L^Q_{YX} \gets \Sigma_{YX} (L^{Q,\top}_{XX})^{-1}$
    \State Reconstruct $\tilde{L}^Q_{YY}$ from $\bm\theta_y$: $\tilde{L}^Q_{YY} \gets f^{-1}(\bm\theta_y)$
    \State Rescale rows: $(L^Q_{YY})_{i,:} \gets (\tilde{L}^Q_{YY})_{i,:} \cdot \sqrt{1 - \mathrm{diag}(L^Q_{YX} L^{Q,\top}_{YX})_i}$
    \State Compose full Cholesky factor: 
    $L^Q \gets 
        \begin{pmatrix} 
            L^Q_{XX} & 0 \\ 
            L^Q_{YX} & L^Q_{YY} 
        \end{pmatrix}
    $
    \State Compute $\Sigma^Q \gets L^Q L^{Q,\top}$
    \State Evaluate objective: $I_Q(\bm X, \bm Y) \gets I_Q(\Sigma^Q)$
    \State Compute gradient: $g_t \gets \nabla_{\bm\theta} I_Q$
    \State Update parameters: $\bm\theta \gets \text{Adam}(\bm\theta, g_t)$
\Until{convergence criterion is met}
\end{algorithmic}
\end{algorithm}

\section{Derivations and properties of \W- and \Minfo} \label{app:WM_properties}

In this section, we prove the formal link between \W- and \Minfo and their formulation presented in Sec.~\eqref{sec:results_math}. %
Then, we discuss some properties of the probability distribution set $\Delta_P$, and show that the mutual information $I(\bX;\bY)$ is a convex function of the marginal covariances. 

In the following paragraphs, we assume $\bX,\bY$ to be Gaussian-distributed random variables with mean zero and correlation matrices $\Sigma_{XX},\Sigma_{YY}$, respectively.

\subsection{Definition of \W- and \Minfo}
In Sec.~\ref{sec:results_math}, we provided the mathematical definitions for \W- and \Minfo, and then claimed they correspond to measures of lower- and higher-order interdependences in Proposition 1 under Assumptions~\ref{ass:low}-\ref{ass:nonneg} and Condition~\ref{cond:existence}. In this paragraph, we present the rigorous proof of the Proposition. 

For completeness, we first recall here Assumptions~\ref{ass:low}-\ref{ass:nonneg}, Condition~\ref{cond:existence}, and Definition~\ref{def:wm}:
\begin{itemize}[leftmargin=*]
    \item[] \begin{assumption}[Pairwise dependence]
        Lower-order information should only depend on the pairwise marginal distributions $P(X_i, Y_j)$.
    \end{assumption}
    \item[] \begin{assumption}[Non-negativity]
        The amount of lower- and higher-order information cannot be negative.
    \end{assumption}
    \item[] \begin{condition}[Existence]
        For a given $P\in\Delta$, there exists a $Q^*$ with  ${Q^*(X_i,Y_j)=P(X_i, Y_j)}$ that has no higher-order information.
    \end{condition}
    \item[]
    \begin{definition} 
    Given two random vectors $\bX$ and $\bY$, the \W- and \Minfo are defined as:
    \begin{align}
        \W(\bX; \bY) &:= \min_{Q\in\Delta_P} I_Q(\bX;\bY) \,, \label{eq:Winfo_def}\\
        \M(\bX; \bY) &:= I(\bX; \bY) - \W(\bX; \bY) \,,\label{eq:Minfo_def}
    \end{align}
    with $\Delta_P \defeq \{Q\in\Delta : Q(X_i, Y_j) = P(X_i, Y_j) ~ \forall i,j\} \,$.
\end{definition}
\end{itemize}

We then have the following result: \\
\propWdef* %
\begin{proof}
    We first establish basic properties of the \W and \Minfo as defined in Eqs.~\eqref{eq:Winfo_def}-\eqref{eq:Minfo_def}. Since the feasible set
    \begin{equation}
        \Delta_P=\{Q\in\Delta:Q(X_i,Y_j)=P(X_i,Y_j)\ \forall i,j\}
    \end{equation}
    is completely determined by the pairwise marginals of $P$, \W depends only on such pairwise marginals and thus satisfies Assumption~\ref{ass:low}. Moreover, ${\mathcal W(\bX;\bY)\ge 0}$ because mutual information is nonnegative, and ${\mathcal M(\bX;\bY)\ge 0}$ as ${\mathcal W(\bX;\bY)\le I(\bX;\bY)}$. Therefore, \W and \M also satisfy Assumption~\ref{ass:nonneg}.

    We now show that, under Condition~\ref{cond:existence}, $\mathcal W$ coincides with the \textit{true} lower-order information ($\mathcal W_{true}$), and $\mathcal M$ with the \textit{true} higher-order information ($\mathcal M_{true}$). 
    By the inclusion-exclusion principle, the mutual information for a generic distribution $Q\in\Delta_P$ can be written as 
    \begin{equation} \label{eq:MI_low_high_decomp_Q}
    \begin{split}
        I_Q(\bX; \bY) & = \mathcal{W}_{true\,Q}(\bX;\bY)+\mathcal{M}_{true\,Q}(\bX;\bY) \\
        & = \mathcal{W}_{true}(\bX;\bY)+\mathcal{M}_{true\,Q}(\bX;\bY)  \,,
    \end{split}
    \end{equation}
    where the second step follows from the fact that lower-order information only depends on the pairwise marginal distributions, and is therefore constant in $\Delta_P$ (Assumption~\ref{ass:low}).
    By minimising the LHS and RHS of Eq.~\eqref{eq:MI_low_high_decomp_Q}, we obtain
    \begin{equation} \label{eq:IQ_W_low_proof}
    \begin{split}
        \min_{Q\in\Delta_P}I_Q(\bX; \bY) & = \min_{Q\in\Delta_P}(\mathcal{W}_{true\,Q}(\bX;\bY)+\mathcal{M}_{true\,Q}(\bX;\bY)) \\
        & = \mathcal{W}_{true}(\bX;\bY) + \min_{Q\in\Delta_P}\mathcal{M}_{true\,Q}(\bX;\bY) \\ 
        & = \mathcal{W}_{true}(\bX;\bY) \,,
    \end{split}
    \end{equation}
    where the last step follows from Assumption~\ref{ass:nonneg} and Condition~\ref{cond:existence}, which ensure that $\exists\, Q^*\in\Delta_P$ s.t. $\min_{Q\in\Delta_P}\mathcal{M}_{true\,Q}(\bX;\bY)=\mathcal{M}_{Q^*}(\bX;\bY)=0$. Hence, comparing Eq.~\eqref{eq:IQ_W_low_proof} and Eq.~\eqref{eq:Winfo_def} proves that \W captures the lower-order information of the system. 
    
    The fact that \Minfo is the higher-order information then follows directly from the inclusion-exclusion principle, for which 
    \begin{equation} \label{eq:MI_low_high_decomp}
        I(\bX; \bY) = \mathcal{W}(\bX;\bY)+\mathcal{M}(\bX;\bY) \,.
    \end{equation}
    
    If, on the other hand, Condition~\ref{cond:existence} is not satisfied, then from Eq.\eqref{eq:IQ_W_low_proof} we have that:
    \begin{equation}
    \begin{split}
        \min_{Q\in\Delta_P}I_Q(\bX; \bY) & = \mathcal{W}_{true}(\bX;\bY) + \min_{Q\in\Delta_P}\mathcal{M}_{true\,Q}(\bX;\bY) \\
        & =: \mathcal{W}(\bX;\bY)\,,
    \end{split}
    \end{equation}
    which therefore shows \W is an upper bound of the lower-order information, i.e.\ $\W\ge\mathcal{W}_{true}$ due to Assumption~\ref{ass:nonneg}. Analogously, for \Minfo we have
    \begin{equation}
        \begin{split}
            \mathcal{M}(\bX;\bY) & := I(\bX;\bY) - \mathcal{W}(\bX;\bY) \\
            & = I(\bX;\bY) - \mathcal{W}_{true}(\bX;\bY) - \min_{Q\in\Delta_P}\mathcal{M}_{true\,Q}(\bX;\bY) \\
            & = \mathcal{M}_{true}(\bX;\bY) - \min_{Q\in\Delta_P}\mathcal{M}_{true\,Q}(\bX;\bY) \\
            & \le \mathcal{M}_{true}(\bX;\bY)\,,
        \end{split}
    \end{equation}
    hence \M is a lower bound of the higher-order information in the system.
\end{proof}

\subsection{Domain $\Delta_P$}
Let's now focus on the optimisation set $\Delta_P$. We have the following result: \\

\begin{proposition}[Convexity of $\Delta_P$] \label{prop:delta_p_convex}
    For $\bX,\bY$ joint Gaussian random variables, the probability distribution domain 
    \begin{equation} \label{eq:delta_p_app}
        \Delta_P \defeq \{Q\in\Delta : Q(X_i, Y_j) = P(X_i, Y_j) ~ \forall i,j\}
    \end{equation}
    is a convex set. 
\end{proposition}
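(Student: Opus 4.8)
The plan is to identify $\Delta_P$ with its set of admissible covariance matrices and then show that this set is convex in the ambient space of symmetric matrices. Since $\bX,\bY$ are zero-mean Gaussians, each $Q\in\Delta_P$ is uniquely determined by its covariance $\Sigma^Q$, so convexity of $\Delta_P$ is naturally understood as convexity of the corresponding set of covariance matrices. First I would translate the defining constraint $Q(X_i,Y_j)=P(X_i,Y_j)$ into conditions on the entries of $\Sigma^Q$: fixing each bivariate Gaussian marginal fixes the diagonal entries of $\Sigma^Q_{XX}$ and $\Sigma^Q_{YY}$ (the variances $\mathrm{Var}(X_i)$ and $\mathrm{Var}(Y_j)$) together with the whole cross-block $\Sigma^Q_{YX}=\Sigma_{YX}$, while leaving the within-block off-diagonals $\mathrm{Cov}(X_i,X_{i'})$ and $\mathrm{Cov}(Y_j,Y_{j'})$ free. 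Crucially, all of these are \emph{affine} constraints on the entries of $\Sigma^Q$.

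Next I would take two distributions $Q_0,Q_1\in\Delta_P$ with covariances $\Sigma_0,\Sigma_1$ and form the affine combination $\Sigma_\lambda=\lambda\Sigma_1+(1-\lambda)\Sigma_0$ for $\lambda\in[0,1]$. Because every constrained entry agrees in $\Sigma_0$ and $\Sigma_1$ (each equals the corresponding entry of $\Sigma^P$), it is preserved under any affine combination, so $\Sigma_\lambda$ satisfies the same marginal constraints. It then remains to check that $\Sigma_\lambda$ is a valid, non-degenerate covariance, i.e.\ positive definite: this follows immediately from the convexity of the cone $\posdef$, since $v^\top\Sigma_\lambda v=\lambda\,v^\top\Sigma_1 v+(1-\lambda)\,v^\top\Sigma_0 v>0$ for every nonzero $v$. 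Hence $\Sigma_\lambda$ is the covariance of a zero-mean Gaussian whose pairwise marginals match $P$, so $Q_\lambda\in\Delta_P$, which establishes convexity.

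I do not expect a genuine obstacle here: once the problem is recast in terms of covariance matrices, the marginal constraints are linear and the only nontrivial ingredient is the convexity of the positive-definite cone (which holds equally for the positive-semidefinite cone, should degenerate Gaussians be admitted). The one point that requires care is conceptual rather than computational, namely that convexity must be taken with respect to the covariance parametrisation and not with respect to mixtures of distributions, since a mixture $\lambda Q_0+(1-\lambda)Q_1$ of Gaussians is generally not Gaussian. This is also precisely the notion of convexity used implicitly in Theorem~\ref{theo:convex}, where the optimisation of \Winfo is carried out over covariance matrices, so the two statements are consistent.
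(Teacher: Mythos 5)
Your proof is correct and takes essentially the same route as the paper: the paper identifies $\Delta$ with the (convex) elliptope of correlation matrices and observes that fixing the pairwise marginals is a ``slicing'' by affine constraints, which is exactly your argument that the constrained entries of $\Sigma^Q$ are affine and the positive-definite cone is convex, spelled out as an explicit check on $\Sigma_\lambda=\lambda\Sigma_1+(1-\lambda)\Sigma_0$. Your closing caveat --- that convexity is meant in the covariance parametrisation rather than under mixtures of distributions --- is the right reading and matches how the paper uses the result in Theorem~\ref{theo:convex}.
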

\begin{proof}
The space $\Delta$ of all probability distributions of $\bX,\bY$ corresponds to the space of all correlation matrices, which defines a region in the parameter space called \textit{elliptope} -- 
a convex subset of Euclidean space bounded by elliptical surfaces. 
Since the optimisation domain $\Delta_P$ is obtained by fixing the pairwise marginal distributions within $\Delta$ (Eq.~\eqref{eq:delta_p_app}), this consists in a slicing operation, which maintains the convexity property of the subset $\Delta_P$. 
\end{proof}

Hence, as explained in Appendix~\ref{app:optim_procedure}, we can then parametrise the correlation matrices in $\Delta_P$ by using Eq.~\eqref{eq:chol_delta_P_app}, which imposes the constraints on the marginal distributions for any valid Cholesky factors $L_{XX}$ and $L_{YY}$. 
However, for computational purposes, there is one additional requirement to satisfy. Since we want $\Sigma_{YY}$ to be a correlation matrix, Eq.~\eqref{eq:cholblock} implies $\diag(L_{YX}L_{YX}^\top + L_{YY}L_{YY}^\top)=1$, which entails an extra constraint on the parameters of $L_{XX}$: \\
\begin{lemma}[Optimisation constraint] \label{lemma:add_constraint}
For any $Q\in\Delta_P$, the Cholesky factor $L_{XX}^Q$ of Eq.~\eqref{eq:chol_delta_P_app} needs to satisfy 
\begin{equation}
    \diag(L_{YX}L_{YX}^\top) = \diag(\Sigma_{YX}\Sigma_{XX}^{Q,-1}\Sigma_{YX}^{\top})<1\,, 
\end{equation}
where $\Sigma_{XX}^Q=L_{XX}^QL_{XX}^{Q,\top}$.
\end{lemma}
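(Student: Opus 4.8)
The plan is to read the inequality directly off the block Cholesky identity already recorded in Eq.~\eqref{eq:cholblock} (now applied to $\Sigma^Q$), using only the requirement that each $Q\in\Delta_P$ be a genuine correlation matrix. First I would isolate the bottom-right block, which gives
\[
\Sigma^Q_{YY} = L^Q_{YX} L^{Q,\top}_{YX} + L^Q_{YY} L^{Q,\top}_{YY}\,.
\]
Because every $Q\in\Delta_P$ inherits the unit-variance normalisation fixed in Appendix~\ref{app:optim_procedure} (the constraint $Q(X_i,Y_j)=P(X_i,Y_j)$ pins down the marginals of the $Y_j$, hence $\diag(\Sigma^Q_{YY})=1$), taking the diagonal of both sides and rearranging yields
\[
\diag\!\left(L^Q_{YY} L^{Q,\top}_{YY}\right) = 1 - \diag\!\left(L^Q_{YX} L^{Q,\top}_{YX}\right)\,,
\]
where $1$ denotes the all-ones vector and the relation is understood componentwise.

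Second, I would argue that the left-hand side is \emph{strictly} positive. Since $L^Q_{YY}$ is the lower-triangular Cholesky factor of the non-singular $Y$-marginal, its diagonal entries are strictly positive; consequently the $i$-th diagonal entry of $L^Q_{YY} L^{Q,\top}_{YY}$, which equals the squared Euclidean norm of the $i$-th row of $L^Q_{YY}$, is bounded below by $(L^Q_{YY})_{ii}^2 > 0$. Combining this with the previous display gives $\diag(L^Q_{YX} L^{Q,\top}_{YX}) < 1$. This positivity/strictness step is where the non-singularity hypothesis on the marginal covariances genuinely enters, and I expect it to be the only delicate point of the argument: without it one would only obtain the weak inequality $\leq 1$, whereas the strict version is exactly what guarantees enough ``room'' to complete $L^Q_{YY}$ to a valid diagonal-positive factor.

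Finally, it remains to recast the left-hand side in the form claimed in the statement. Substituting $L^Q_{YX} = \Sigma_{YX}\,(L^{Q,\top}_{XX})^{-1}$ from Eq.~\eqref{eq:chol_delta_P_app} and simplifying gives
\[
L^Q_{YX} L^{Q,\top}_{YX} = \Sigma_{YX}\left(L^{Q}_{XX} L^{Q,\top}_{XX}\right)^{-1}\Sigma_{YX}^{\top} = \Sigma_{YX}\left(\Sigma^Q_{XX}\right)^{-1}\Sigma_{YX}^{\top}\,,
\]
where the last equality uses $\Sigma^Q_{XX}=L^Q_{XX} L^{Q,\top}_{XX}$. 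Taking the diagonal reproduces the expression $\diag(\Sigma_{YX}\,\Sigma_{XX}^{Q,-1}\,\Sigma_{YX}^{\top})$ of the lemma, and together with the bound derived above this establishes $\diag(L_{YX}L_{YX}^\top)=\diag(\Sigma_{YX}\,\Sigma_{XX}^{Q,-1}\,\Sigma_{YX}^{\top})<1$, completing the proof.
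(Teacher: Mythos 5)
Your proof is correct and takes essentially the same route as the paper: the lemma there is stated as an immediate consequence of the correlation normalisation $\diag\!\left(L^Q_{YX}L^{Q,\top}_{YX} + L^Q_{YY}L^{Q,\top}_{YY}\right)=1$ read off the block Cholesky identity, combined with the substitution $L^Q_{YX}=\Sigma_{YX}\left(L^{Q,\top}_{XX}\right)^{-1}$. Your write-up simply makes explicit the strictness step (the diagonal of $L^Q_{YY}L^{Q,\top}_{YY}$ is strictly positive when the covariance is non-singular), which the paper leaves implicit.
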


Although in theory Lemma~\ref{lemma:add_constraint} constrains our optimisation procedure by imposing an additional condition on $L_{XX}^Q$, we see that in practice this is naturally resolved by the properties we prove below.

\subsection{\texorpdfstring{Convexity of $I(\bX; \bY)$}{Convexity of I(X;Y)}}

We now turn to the properties of the mutual information $I(\bX; \bY)$. For ease of notation, in the rest of this paragraph we denote with $\SX$ and $\SY$ the correlation matrices of $\bX,\bY$, with dimensions $d_X\times d_X$ and $d_Y\times d_Y$ respectively, and with $A$ the cross-correlation term $\Sigma_{XY}$. Hence the full covariance matrix of the system reads
\begin{equation}
    \Sigma = \begin{pmatrix}
        \SX & A \\
        A^\top & \SY 
    \end{pmatrix} \,.
\end{equation}

We start by writing explicitly the expression of the mutual information~\cite{cover1991information} for Gaussian variables: 
\begin{equation} \label{eq:MI_gaussian}
    \begin{split}
     I(\bX;\bY) & = \dfrac12\log\left(\frac{\det(\SX)\det(\SY)}{\det(\Sigma)} \right) \\
     & = \dfrac12\log\left(\frac{\det(\SY)}{\det(\SY-A\SX^{-1}A^\top)} \right) \,,
    \end{split}
\end{equation}
where in the second step we made use of the Schur-complement property
\begin{equation}
    \det(\Sigma)=\det(\SX)\det(\SY-A\SX^{-1}A^\top) \,.
\end{equation}
Thus, the objective function of the optimisation is a real-valued function of positive definite matrices
\begin{equation}
\begin{aligned}
    f\colon &\; S_{d_X}^{++} \times S_{d_Y}^{++} \to \mathbb{R} \\
            &\; (\SX, \SY) \mapsto I(\bX; \bY) \,,
\end{aligned}
\end{equation}
where $S_n^{++}$ is the space of positive definite $n\times n$ matrices.

In the following, we prove that $f$ is convex. However, to reach the main Proposition, we first need some results on positive definite matrices. \\

\begin{definition}
A symmetric matrix $A \in \mathbb{R}^{n \times n}$ is \emph{positive definite} if, for all non-zero vectors $x \in \mathbb{R}^n$, the following holds:
\begin{equation}
    x^\top A x > 0.
\end{equation}
\end{definition}

Additionally, an ordering structure can be imposed on the space $\posdef$: \\
\begin{definition}
Let $A, B \in \posdef$. We say that $A$ is strictly smaller than $B$ in the Loewner sense if their difference is positive definite, i.e.\
\begin{equation}
A \prec B \quad\text{if}\quad B - A \in \posdef \,.
\end{equation}
\end{definition}
Using this partial ordering, we can indicate that $A$ is a positive definite matrix by writing $A\succ0$. 
    
With these definitions at hand, we can continue in our derivation. We start by reminding of a standard result: \\
\begin{lemma}[Square of positive definite matrix]\label{lemma:square_posdef}
A matrix $A\in \mathbb{R}^{n\times n}$ is positive definite if and only if it has a unique positive definite square root $A^{1/2}$:
\begin{equation}
    A \in \posdef \iff \exists ! \, A^{1/2}\in\posdef \text{ s.t. } A^{1/2}A^{1/2} = A \,.
\end{equation}
\end{lemma}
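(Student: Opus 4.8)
The plan is to prove the two implications of the biconditional in turn, handling existence and uniqueness of the square root separately. The reverse implication is immediate: if $A = A^{1/2}A^{1/2}$ with $A^{1/2}\in\posdef$, then $A$ is symmetric (since $A^{1/2}$ is), and for any nonzero $x$ we have $x^\top A x = (A^{1/2}x)^\top(A^{1/2}x) = \|A^{1/2}x\|^2 > 0$, where strict positivity follows because $A^{1/2}$, being positive definite, is invertible and hence maps $x\neq 0$ to a nonzero vector. Thus $A\in\posdef$.

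For the forward implication, I would first establish existence using the spectral theorem. Since $A$ is symmetric positive definite, it admits an orthogonal diagonalisation $A = Q\Lambda Q^\top$ with $Q$ orthogonal and $\Lambda = \diag(\lambda_1,\dots,\lambda_n)$ having strictly positive entries. Defining $A^{1/2} \defeq Q\,\diag(\sqrt{\lambda_1},\dots,\sqrt{\lambda_n})\,Q^\top$ yields a symmetric matrix with positive eigenvalues, hence a member of $\posdef$, that squares to $A$ by a direct computation using $Q^\top Q = I$.

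The main obstacle is the uniqueness claim. I would argue that any $B\in\posdef$ with $B^2 = A$ must share its eigenvectors with $A$: writing $B = R M R^\top$ for its own spectral decomposition, the relation $B^2 = R M^2 R^\top = A$ shows that the columns of $R$ are eigenvectors of $A$ with eigenvalues $\mu_i^2$. Matching these against the eigenvalues of $A$ forces $\mu_i = \sqrt{\lambda_i}$ (the positive root, since $\mu_i > 0$), so $B$ acts as multiplication by $\sqrt{\lambda_i}$ on each eigenspace of $A$ exactly as $A^{1/2}$ does. The only subtlety is the case of repeated eigenvalues, where individual eigenvectors are not determined; there I would note that $B$ commutes with $A = B^2$, so the two are simultaneously diagonalisable, and the restriction of $B$ to each eigenspace of $A$ is a positive definite operator whose square is $\lambda I$, which is necessarily $\sqrt{\lambda}\,I$. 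Since this depends only on the canonical eigenspace decomposition of $A$, the matrices $B$ and $A^{1/2}$ agree as operators on each eigenspace and therefore coincide, completing the proof.
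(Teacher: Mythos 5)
Your proof is correct, but it is worth noting that the paper does not actually prove this lemma at all: its ``proof'' is a one-line citation to standard references (Horn and Johnson's \emph{Matrix Analysis} and Bhatia's \emph{Positive Definite Matrices}), treating the existence and uniqueness of the positive definite square root as classical background. What you have written out is essentially the canonical argument those references contain: the reverse direction via $x^\top A x = \|A^{1/2}x\|^2$ together with invertibility of $A^{1/2}$, existence via the spectral decomposition $A = Q\Lambda Q^\top$ and the definition $A^{1/2} = Q\Lambda^{1/2}Q^\top$, and uniqueness by showing that any positive definite $B$ with $B^2 = A$ commutes with $A$, hence preserves each eigenspace of $A$, and restricts there to a positive definite operator whose square is $\lambda I$ and which must therefore be $\sqrt{\lambda}\,I$. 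Your handling of the repeated-eigenvalue case is the right fix for the gap that a naive ``match the eigenvectors'' argument would leave, since individual eigenvectors are not determined when eigenvalues repeat; the commutation and simultaneous-diagonalisation step is exactly what makes uniqueness rigorous. So your proposal supplies, correctly and completely, the self-contained argument that the paper chose to outsource.
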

\begin{proof}
    See e.g.\ Refs.~\cite{horn2012matrix, bhatia2009positive}.
\end{proof}

Directly following from Lemma~\ref{lemma:square_posdef}, we have a useful property:\\
\begin{lemma} \label{lemma:ABA_posdef}
    Given $A,B\in\posdef$, then $A^{1/2}BA^{1/2}\in \posdef$.
\end{lemma}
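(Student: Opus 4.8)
The plan is to verify directly the two defining conditions of a positive definite matrix, namely symmetry and strict positivity of the associated quadratic form on every nonzero vector. The starting point is Lemma~\ref{lemma:square_posdef}, which I would invoke to guarantee that $A^{1/2}$ exists, lies in \posdef, and is therefore both symmetric and invertible. These three facts are all the structure the argument needs.

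First I would dispatch symmetry. Since $A^{1/2}$ and $B$ are both symmetric, a single transpose computation gives $(A^{1/2} B A^{1/2})^\top = A^{1/2} B^\top A^{1/2} = A^{1/2} B A^{1/2}$, so the matrix is symmetric and the notion of positive definiteness applies to it. For the quadratic form, the key move is the change of variables $y \defeq A^{1/2} x$. For an arbitrary nonzero $x \in \mathbb{R}^n$, invertibility of $A^{1/2}$ forces $y \neq 0$, and then $x^\top A^{1/2} B A^{1/2} x = (A^{1/2} x)^\top B (A^{1/2} x) = y^\top B y > 0$ by positive definiteness of $B$. Combined with symmetry, this yields $A^{1/2} B A^{1/2} \in \posdef$.

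There is no genuine obstacle to anticipate here, as the result is an immediate congruence argument. The only point worth flagging is that the proof genuinely relies on $A^{1/2}$ being invertible, so that $y \neq 0$ whenever $x \neq 0$; this is precisely why positive definiteness of $A$ (rather than mere positive semidefiniteness) is needed, and why appealing to Lemma~\ref{lemma:square_posdef} for the positive definite square root is the cleanest route.
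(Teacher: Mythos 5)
Your proof is correct and matches the paper's argument essentially verbatim: both establish symmetry by a transpose computation using the symmetry of $A^{1/2}$ and $B$, then prove positivity of the quadratic form via the substitution $y = A^{1/2}x$, relying on the invertibility of $A^{1/2}$ (from Lemma~\ref{lemma:square_posdef}) to ensure $y \neq 0$. Nothing is missing, and your remark on why strict positive definiteness of $A$ is needed is a nice clarification that the paper leaves implicit.
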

\begin{proof}
    Consider the matrix $C = A^{1/2} B A^{1/2}$. We show that $C$ is symmetric and positive definite.
    
    Since $A^{1/2}$ and $B$ are symmetric (Lemma~\ref{lemma:square_posdef} and hypothesis), we have
    \begin{equation}
    C^\top = (A^{1/2} B A^{1/2})^\top = A^{1/2} B^\top A^{1/2} = A^{1/2} B A^{1/2} = C \,,
    \end{equation}
    so $C$ is symmetric.
    
    Then, let $x \in \mathbb{R}^n$, $x \neq 0$, and define $y = A^{1/2} x$. Since $A^{1/2}$ is invertible, $y \neq 0$. Then,
    \begin{equation}
    x^\top C x = x^\top A^{1/2} B A^{1/2} x = y^\top B y > 0 \,,
    \end{equation}
    because $B \in \posdef$.
    Hence, $C = A^{1/2} B A^{1/2} \in \posdef$.
\end{proof}

Although the product of positive definite matrices is not in general positive definite, interesting properties hold for the trace of products:\\
\begin{lemma} \label{lemma:TrAB_ABAC}
    Given $A,B,C\in\posdef$, we have
\begin{align}
    & \text{a)} \quad \Tr(AB)>0 \\
    & \text{b)} \quad \Tr(ABAC)>0 \,.
\end{align}
\end{lemma}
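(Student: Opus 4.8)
The plan is to establish (a) first and then reduce (b) to (a) by a symmetric rebracketing of the product. For part (a), I would invoke Lemma~\ref{lemma:square_posdef} to write $A = A^{1/2} A^{1/2}$ with $A^{1/2}$ its unique positive definite square root, and then use the cyclic invariance of the trace:
\begin{equation}
    \Tr(AB) = \Tr(A^{1/2} A^{1/2} B) = \Tr(A^{1/2} B A^{1/2}) \,.
\end{equation}
By Lemma~\ref{lemma:ABA_posdef}, the matrix $A^{1/2} B A^{1/2}$ is itself positive definite, so its trace equals the sum of its eigenvalues, all of which are strictly positive; hence $\Tr(AB) > 0$. (Equivalently, each diagonal entry $e_i^\top (A^{1/2} B A^{1/2}) e_i$ is strictly positive, and the trace is their sum.)

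For part (b), the key idea is that cyclic invariance lets me group the two square-root factors of each copy of $A$ symmetrically around $B$ and $C$. Writing $A = A^{1/2} A^{1/2}$ in both occurrences and cycling a single trailing $A^{1/2}$ back to the front gives
\begin{equation}
    \Tr(ABAC) = \Tr\!\big( (A^{1/2} B A^{1/2})(A^{1/2} C A^{1/2}) \big) \,.
\end{equation}
Setting $M = A^{1/2} B A^{1/2}$ and $N = A^{1/2} C A^{1/2}$, both are positive definite by Lemma~\ref{lemma:ABA_posdef}, so $\Tr(ABAC) = \Tr(MN)$ is exactly an instance of part (a) applied to the pair $(M,N)$, and therefore strictly positive.

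The one step requiring genuine care is the rebracketing: I must verify it using only the cyclic property of the trace, not commutativity of matrix products (which fails here). Concretely, $\Tr(A^{1/2} B A^{1/2} \cdot A^{1/2} C A^{1/2})$ has $A^{1/2}$ as its final factor, and cycling that factor to the front recovers $\Tr(A^{1/2} A^{1/2} B A^{1/2} A^{1/2} C) = \Tr(ABAC)$. Once this identity is confirmed, no real obstacle remains—the result follows entirely from the square-root factorization of Lemma~\ref{lemma:square_posdef}, the closure property of Lemma~\ref{lemma:ABA_posdef}, and the positivity of eigenvalues of a positive definite matrix used in part (a).
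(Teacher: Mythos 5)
Your proposal is correct and follows essentially the same route as the paper's own proof: part (a) via the factorization $A=A^{1/2}A^{1/2}$, cyclic invariance of the trace, and positive definiteness of $A^{1/2}BA^{1/2}$ (Lemma~\ref{lemma:ABA_posdef}), and part (b) reduced to part (a) by regrouping $\Tr(ABAC)$ as $\Tr\bigl((A^{1/2}BA^{1/2})(A^{1/2}CA^{1/2})\bigr)$. Your explicit verification of the rebracketing step via cycling a single $A^{1/2}$ factor is a careful touch the paper leaves implicit, but the argument is the same.
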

\begin{proof}
a) We have
\begin{equation}
\begin{split}
    \Tr(AB) & \alphaeq \Tr(A^{1/2}A^{1/2}B) \\ 
    & \betaeq \Tr(A^{1/2}BA^{1/2}) \\
    & \gammadiseq 0 \,,
    \end{split}
\end{equation}
where in (a) we used Lemma~\ref{lemma:square_posdef}, in (b) the cyclic property of the trace, and (c) follows from $A^{1/2}BA^{1/2}$ being positive definite (Lemma~\ref{lemma:ABA_posdef}).

b) We can write
\begin{equation}
\begin{split}
    \Tr(ABAC) & = \Tr(A^{1/2}BA^{1/2}A^{1/2}CA^{1/2}) \\
    & = \Tr(\tilde{B}\tilde{C}) \\
    & > 0 \,,
\end{split}
\end{equation}
where we introduced the positive definite matrices $\tilde{B}\equiv A^{1/2}BA^{1/2}$ and $\tilde{C}\equiv A^{1/2}CA^{1/2}$, and used part a) and Lemma~\ref{lemma:ABA_posdef}. 
\end{proof}

Finally, we provide a result on the monotonicity of the trace operator:\\
\begin{lemma} \label{lemma:tr_squares}
    Consider $A,B\in\posdef$ so that $A\succ B$. Then $\Tr(A^2)>\Tr(B^2)$.
\end{lemma}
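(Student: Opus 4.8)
The plan is to reduce the claim to a single application of Lemma~\ref{lemma:TrAB_ABAC}(a) by factoring the difference $\Tr(A^2)-\Tr(B^2)$ into the trace of a product of two positive definite matrices. First I would use linearity and the cyclic property of the trace to rewrite
\begin{equation}
    \Tr\bigl((A-B)(A+B)\bigr) = \Tr(A^2) + \Tr(AB) - \Tr(BA) - \Tr(B^2) = \Tr(A^2) - \Tr(B^2) \,,
\end{equation}
where the two cross terms cancel because $\Tr(AB)=\Tr(BA)$. This identity holds for any symmetric $A,B$ and converts the claim into the statement that $\Tr\bigl((A-B)(A+B)\bigr)>0$.

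Next I would verify that both factors lie in \posdef. The hypothesis $A\succ B$ means precisely that $A-B\in\posdef$, while $A+B\in\posdef$ as the sum of two positive definite matrices, since the quadratic form $x^\top(A+B)x = x^\top A x + x^\top B x$ is strictly positive for every $x\neq 0$. With both $A-B$ and $A+B$ in \posdef, Lemma~\ref{lemma:TrAB_ABAC}(a) applies directly and yields $\Tr\bigl((A-B)(A+B)\bigr)>0$, which combined with the identity above gives $\Tr(A^2)>\Tr(B^2)$.

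The argument is essentially immediate once the factorisation is spotted, so I do not anticipate a genuine obstacle; the only point requiring a moment of care is the cancellation of the cross terms, which relies solely on the symmetry of $A$ and $B$ together with cyclicity of the trace, and not on the matrices commuting (which in general they do not). An alternative route through the square-root machinery of Lemmas~\ref{lemma:square_posdef}-\ref{lemma:ABA_posdef} would also work, but the factored-trace approach is the most direct given the results already at hand.
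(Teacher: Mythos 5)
Your proof is correct, and it hinges on the same key result as the paper's: Lemma~\ref{lemma:TrAB_ABAC}(a), applied after observing that the hypothesis $A\succ B$ means exactly $A-B\in\posdef$. The two arguments differ only in their algebraic packaging. The paper sets $C\equiv A-B$, writes $A=B+C$, and expands the square to obtain $\Tr(A^2)-\Tr(B^2)=2\Tr(BC)+\Tr(C^2)$, invoking Lemma~\ref{lemma:TrAB_ABAC} for the positivity of both terms. You instead use the difference-of-squares factorisation $\Tr(A^2)-\Tr(B^2)=\Tr\bigl((A-B)(A+B)\bigr)$, which costs you the extra (but immediate) check that $A+B\in\posdef$ and then needs only a single application of the lemma. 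Your cancellation of the cross terms via $\Tr(AB)=\Tr(BA)$ is valid and, as you correctly note, does not presuppose that $A$ and $B$ commute; the identity in fact holds for arbitrary square matrices, symmetric or not. Neither route is more general than the other: yours is marginally more economical in its use of the lemma, while the paper's stays entirely inside the cone generated by $B$ and $A-B$ without introducing the auxiliary matrix $A+B$.
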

\begin{proof}
    Let $C\equiv A-B$. By definition of $A\succ B$, we have that $C\in\posdef$. 
    We can then write
    \begin{equation} \label{eq:trace_square}
    \begin{split}
        \Tr(A^2)-\Tr(B^2) & = \Tr((B+C)^2)-\Tr(B^2) \\
        & \alphaeq 2\Tr(BC)+\Tr(C^2) \\
        & \betadiseq 0 \,,
    \end{split}
    \end{equation}
    where in (a) we expanded $A = B+C$ and used linearity of the trace operator, and (b) follows from positive definiteness of $B,C$ (Lemma~\ref{lemma:TrAB_ABAC}).
\end{proof}

We are now ready to state the main result:\\

\begin{proposition} \label{prop:Wconvex}
    Given $\bX, \bY$ multivariate jointly Gaussian random variables with non-singular marginal and cross-covariances, their \Winfo is a convex function in the marginal covariances of $\bX, \bY$. \\
\end{proposition}
\begin{proof}
Since $f$ is convex if and only if it is convex along any direction in its domain, it is sufficient to show convexity along any line. 
Using a real parameter $t$, we parametrise such a generic line as 
\begin{equation} \label{eq:t_parametrisation}
    \SX = X_0 + tX ,\quad \SY = Y_0 + tY \,,
\end{equation}
where $X_0, X, \in \posdef[d_X]$ and $Y_0, Y \in \posdef[d_Y]$, with $d_X\times d_X$ and $d_Y\times d_Y$ the dimensions of $\Sigma_X$ and $\Sigma_Y$, respectively.

Therefore, we define $g(t)\defeq f( X_0 + tX,  Y_0 + tY)\,\, \forall t>0 \,\,\text{s.t. } X_0 + tX\succ 0 \text{ and } Y_0 + tY\succ 0$. 
Using this manipulation, we are now dealing with a real-valued function of a real variable:
\begin{equation}
\begin{aligned}
    g\colon &\; \mathbb{R} \to \mathbb{R} \\
            &\; t \mapsto I(\bX; \bY) \,.
\end{aligned}
\end{equation}
Using Eq.~\eqref{eq:MI_gaussian}, we can write $g$ explicitly as
\begin{equation}
    g(t) = \log\left(\det(\SY)\right) - \log \left(\det(\SY - A \SX^{-1} A^\top)\right) \,, 
\end{equation}
where $\Sigma_X$ and $\Sigma_Y$ should be interpreted as functions of $t$ as per Eq.~\eqref{eq:t_parametrisation}. 

Following~\cite{boyd2004convex, marshall1979inequalities}, we prove strict convexity by showing that the second derivative $d^2g(t)/dt^2$ is positive $\forall t>0$.

\textit{First Derivative:}
\begin{equation}
    \frac{dg(t)}{dt} = \Tr\left(\SY^{-1}Y \right) - \Tr \left( (\SY-A\SX^{-1}A^\top)^{-1}(Y + A \SX^{-1} X \SX^{-1} A^\top) \right) \,,
\end{equation}
where we used the identities~\cite{petersen2008matrix}:
\begin{gather}
    \frac{d(\log(\det(X))}{d\alpha} = \Tr \left(X^{-1} \frac{dX}{d\alpha} \right) \,,\\
    \frac{dX^{-1}}{d\alpha} = - X^{-1} \frac{dX}{d\alpha} X^{-1} \,.
\end{gather}

\textit{Second Derivative:}
\begin{equation} \label{eq:second_der}
\begin{split}
    \frac{d^2g(t)}{dt^2} & = -\Tr \left( \SY^{-1} Y \SY^{-1} Y \right) + \\
    & \quad + \Tr \left((\SY-A\SX^{-1}A^\top)^{-1}(Y + A \SX^{-1} X \SX^{-1} A^\top)(\SY-A\SX^{-1}A^\top)^{-1}(Y + A \SX^{-1} X \SX^{-1} A^\top) \right) +\\
    & \quad + 2\Tr \left((\SY-A\SX^{-1}A^\top)^{-1}(A\SX^{-1}X\SX^{-1}X\SX^{-1}A^\top) \right)  \\
    & \alphaeq -\Tr \left( \SY^{-1} Y \SY^{-1} Y \right) + \Tr \left((\SY-A\SX^{-1}A^\top)^{-1}Y(\SY-A\SX^{-1}A^\top)^{-1}Y) \right) +\\
    & \quad + 2\Tr \left((\SY-A\SX^{-1}A^\top)^{-1}Y(\SY-A\SX^{-1}A^\top)^{-1}(A \SX^{-1} X \SX^{-1} A^\top) \right) +\\
    & \quad + \Tr \left((\SY-A\SX^{-1}A^\top)^{-1}(A \SX^{-1} X \SX^{-1} A^\top)(\SY-A\SX^{-1}A^\top)^{-1}(A \SX^{-1} X \SX^{-1} A^\top) \right) +\\
    & \quad + 2\Tr \left((\SY-A\SX^{-1}A^\top)^{-1}(A\SX^{-1}X\SX^{-1}X\SX^{-1}A^\top) \right) \\
    & \betadiseq 2\Tr \left((\SY-A\SX^{-1}A^\top)^{-1}Y(\SY-A\SX^{-1}A^\top)^{-1}(A \SX^{-1} X \SX^{-1} A^\top) \right) +\\
    & \quad + \Tr \left((\SY-A\SX^{-1}A^\top)^{-1}(A \SX^{-1} X \SX^{-1} A^\top)(\SY-A\SX^{-1}A^\top)^{-1}(A \SX^{-1} X \SX^{-1} A^\top) \right) +\\
    & \quad + 2\Tr \left((\SY-A\SX^{-1}A^\top)^{-1}(A\SX^{-1}X\SX^{-1}X\SX^{-1}A^\top) \right) \\
    & \gammadiseq 0 \,.
    \end{split}
\end{equation}
In (a) we expanded the second expression and used the cyclic property of the trace to sum two equal terms, whereas in (b) we noticed that
\begin{equation} \label{eq:trace_ineq}
    \Tr \left((\SY-A\SX^{-1}A^\top)^{-1}Y(\SY-A\SX^{-1}A^\top)^{-1}Y) \right) > \Tr \left( \SY^{-1} Y \SY^{-1} Y \right) \,.
\end{equation}
To see this, let's introduce the shorthand $S\equiv(\SY-A\SX^{-1}A^\top)$, i.e.\ $S$ is the Schur-complement of $\Sigma$ w.r.t.\ to $\Sigma_X$. Since $\Sigma$ is positive definite and $\SX$ is invertible, it follows that $S$ is also positive definite~\cite{zhang2006schur}. 
We can therefore rewrite the traces of Eq.~\eqref{eq:trace_ineq} as 
\begin{align}
        \Tr(S^{-1}YQ^{-1}Y)  &= \Tr(Y^{1/2}S^{-1}Y^{1/2}Y^{1/2}S^{-1}Y^{1/2}) \equiv \Tr (M^2) \label{eq:trace_M^2} \\
        \Tr(\SY^{-1} Y\SY^{-1} Y)  &= \Tr(Y^{1/2}\SY^{-1} Y^{1/2}Y^{1/2}\SY^{-1} Y^{1/2}) \equiv \Tr (N^2) \label{eq:trace_N^2} \,,
\end{align}
where $M\equiv Y^{1/2}S^{-1}Y^{1/2}$ and $N\equiv Y^{1/2}\SY^{-1}Y^{1/2}$ are now positive definite matrices (Lemma~\ref{lemma:ABA_posdef}).
Moreover, we have that $M\succ N$:
\begin{align} \label{eq:trace_M>N}
    A\SX^{-1}A^\top \succ  0 & \implies \SY-A\SX^{-1}A^\top \prec \SY \\
    & \implies (\SY-A\SX^{-1}A^\top)^{-1}\succ \SY^{-1} \notag \\
    & \implies Y^{1/2}(\SY-A\SX^{-1}A^\top)^{-1}Y^{1/2}\succ Y^{1/2}\SY^{-1}Y^{1/2} \notag \\
    & \,\,\,\,\equiv\,\,\,\, M\succ N \notag  \,,
\end{align}
where the third implication holds as $Y^{1/2}$ is positive definite. 
Therefore, having Eqs.~\eqref{eq:trace_M^2}-\eqref{eq:trace_N^2}-\eqref{eq:trace_M>N}, we can apply Lemma~\ref{lemma:tr_squares} and obtain Eq.~\eqref{eq:trace_ineq}. 

Finally, in the last step (c) we made use of Lemma~\ref{lemma:TrAB_ABAC}.

Thus, we proved that the mutual information is convex along a generic line in the domain $\Delta_P$, and therefore that it is globally convex. 
\end{proof}

The proof presented here assumes that $\SX, \SY, A$ are well-defined full-rank matrices, %
i.e.\ $\SX \in \RealX{d_X}$, $\SY \in \RealX{d_Y}$, and $A \in \mathbb{R}^{d_Y\times d_X}$ and $\text{rank}(\SX)=d_X$, $\text{rank}(\SY)=d_Y$, $\text{rank}(A)=\min (d_X, d_Y)$.
In case these conditions are not met, then the mutual information is not convex. As a counterexample, consider $A_{ij}=0\,\,\,\forall i,j$, then $I(\bX;\bY)=0\,\,\,\forall\,\SX,\SY\in\posdef$, with $n=d_X=d_Y$.

\subsection{Boundaries of $\Delta_P$}

Although the minimisation problem is well-posed, we need to make sure that the optimisation procedure does not step outside the boundaries of $\Delta_P$, where the function is not defined (Lemma~\ref{lemma:add_constraint}). 
Luckily, it is easy to prove that on the boundary of the domain $\Delta_P$, the mutual information becomes infinite, which is our second main result: \\

\begin{proposition} \label{prop:inner}
    Given $\bX, \bY$ multivariate jointly Gaussian random variables with non-singular marginal covariances, the probability distribution $Q^*$ that minimises Eq.~\eqref{eq:Winfo_def} is an inner point of $\Delta_P$. \\
\end{proposition}
\begin{proof}
    In the parametrisation given by Eq.\eqref{eq:cholblock}, the border of $\Delta_P$ is characterised by $\diag(L_{YX}L_{YX}^\top)_i=1$ for some $i=1,...,n$. If one such diagonal entry is 1, it follows that $\diag(L_{YY}L_{YY}^\top)_i=0$ and $\diag(L_{YY})_i=0$, which implies $\det(\Sigma)=\det(L)^2=0$. 
    On the other hand, $\det(\Sigma_{XX})>0$ and $\det(\Sigma_{YY})>0$ by hypotheses, hence on the boundary we obtain
    \begin{equation} \label{eq:MI_diverging}
        I(\bX;\bY) = \dfrac12\log\left(\frac{\det(\SX)\det(\SY)}{\det(\Sigma)} \right) \rightarrow + \infty \,.
    \end{equation} 
    Thus, the distribution $Q^*$ has to be an inner point of $\Delta_P$
    
\end{proof}
Note that with the parametrisation chosen, the hypotheses of Proposition ~\ref{prop:inner} can be further relaxed, as we only need $L_{XX}$ and $L_{YY}$ to be non-singular upper diagonal matrices. In fact, ${\det (L_{XX})>0 \implies \det(\Sigma_{XX})>0}$, and 
\begin{equation}
\begin{split}
    \det(\Sigma_{YY})& =\det(L_{YX}L_{YX}^\top+L_{YY}L_{YY}^\top) \\
    & \alphadisequal \det(L_{YX}L_{YX}^\top)+\det(L_{YY}L_{YY}^\top) \\
    & \betaeq \det(L_{YX}L_{YX}^\top) \\
    & > 0 \,,
\end{split}
\end{equation}
where in (a) we used the matrix concavity of the map $A\mapsto \det(A)$, and in (b) the fact that $L_{YX}L_{YX}^\top$ and $L_{YY}L_{YY}^\top$ are positive definite. The result then follows in the same way from Eq.~\eqref{eq:MI_diverging}.

This result also explains why we can avoid addressing the issue raised in Lemma~\ref{lemma:add_constraint}.
Since $I(\bX;\bY)$ is continuous at any inner point, its gradients must diverge approaching the boundaries. This ensures that standard optimisation techniques such as Stochastic Gradient Descent do not step outside the domain even without directly imposing the extra constraints.

Again, we remind that if $A$ is not full-rank, then the minimum of the mutual information may lie on the boundary as $\det(L_{YX})$ could also vanish. 

Thus, we can present our final result: \\ 
\theoConvex*
\begin{proof}
    Direct from Propositions \ref{prop:Wconvex}-\ref{prop:inner}, and the fact that $\Delta_P$ is a convex set (Proposition~\ref{prop:delta_p_convex}).
\end{proof}

Importantly, we note that the construction outlined in this section implicitly assumes that the minimising distribution $Q^*$ is a joint Gaussian distribution in $\bX$ and $\bY$. We validate this assumption in Appendix~\ref{app:discr_v_gauss}.

\section{Validation of Gaussian minimisation} \label{app:discr_v_gauss}

In this section, we assess the validity of the assumption of joint Gaussianity of the solution which underlies our implementation (Appendix~\ref{app:optim_procedure}). Although this assumption formally yields an upper bound on the true solution, here we show that the bound is tight for toy systems where the exact solution is known. For more general systems, the Gaussian solution closely matches that obtained without imposing joint Gaussianity. 

We first consider the toy models COPY, transfer, and XOR (Sec.~\ref{sec:results_synthetic}), as well as the sixteen systems shown in Fig.~\figsubref{fig:dominant_tests_phiID}{b} and reported in Appendix~\ref{app:broja_phiid}. These systems were explicitly constructed to maximise a specific informational contribution, so their true (expected) solution is known. In all cases, the solution obtained under the joint Gaussian assumption correctly reproduces the true distribution.

We then turn to more general systems and compare the solution obtained under the joint Gaussianity assumption with that obtained without it.
To achieve this, we perform the minimisation of Eq.~\eqref{eq:Winfo_def} in discrete space, where we can relax the Gaussianity constraint. 
Specifically, we generate random zero-mean Gaussian distributions by sampling their covariance matrix from a Wishart distribution. Each distribution is then discretised using bins whose widths are weighted by distribution quantiles. This ensures finer resolution in high-density regions and coarser resolution in low-density areas.
Finally, we perform the minimisation of Eq.~\eqref{eq:Winfo_def} both in Gaussian and discrete space and compare the resulting values ($\mathcal{W}_{gaussian}$ and $\mathcal{W}_{discrete}$). While the Gaussian optimisation is described above (Appendix~\ref{app:optim_procedure}), the discrete minimisation is based on the Iterative Proportional Fitting Procedure (IPFP), which allows us to employ gradient-descent methods (Adam) while rescaling the marginal distributions to satisfy the constraints of $\Delta_P$. The full implementation of the discrete minimisation is available at \url{https://github.com/alberto-liardi/wimfo}.

A direct comparison of the minimising distributions would involve an additional fit of a Gaussian model to the discrete solution, or a discretisation of the Gaussian solution. Since both approaches would introduce additional bias, we use the obtained values of \Winfo as proxies for the comparison, which, after all, are the estimated quantity of interest.
Moreover, after sampling and discretising the starting Gaussian distribution, we exclude cases in which the mutual information differs by more than 20\% before and after the discretisation, which would indicate a poor discrete approximation and therefore strongly affect the estimation of \W. To analyse how our results depend on this filtering, we repeat the procedure with a more rigid 5\% tolerance. 
As expected, the discretisation is highly sensitive to the number of bins used, as evidenced by the stark effect of the cutoffs (Tab.~\ref{tab:discr_gauss}).
Hence, this confirms that mutual information values can change substantially and represent the primary source of discrepancies between $\mathcal{W}_{gaussian}$ and $\mathcal{W}_{discrete}$, and that the above filterings are needed. %

\begin{table}[ht]
\centering
\setlength{\tabcolsep}{20pt} %
\renewcommand{\arraystretch}{1.1} %
\begin{tabular}{c c c}
\toprule
$N$ & 20\% cut & 5\% cut \\
\midrule
5  & 0.00\%  & 0.00\%  \\
10 & 29.18\% & 0.00\%  \\
20 & 70.73\% & 5.27\%  \\
50 & 92.62\% & 58.25\% \\
75 & 93.91\% & 74.35\% \\
\bottomrule
\end{tabular}
\caption{Percentage of randomly generated Gaussian systems whose mutual information remains within 20\% or 5\% of the original value after discretisation. Each value of $N$ indicates the number of bins per dimension.}
\label{tab:discr_gauss}
\end{table}

Comparing $\mathcal{W}_{gaussian}$ and $\mathcal{W}_{discrete}$, we observe that variations in \W between Gaussian and discrete solutions remain limited and always fall below a relative difference of 5\% when the filtering of 20\% is applied (Fig.~\figsubref{fig:discr_v_gauss}{b}). Moreover, such differences significantly decrease when the 5\% cut is performed (Fig.~\figsubref{fig:discr_v_gauss}{c}). Therefore, we argue that the remaining discrepancies are likely attributable to imperfect representation of the Gaussian distribution in discrete space, rather than a limitation imposed by the joint Gaussianity assumption. In fact, given the observed convergence, further increasing the number of bins would be expected to reduce these differences, as it would allow employing a stricter filtering cutoff. Moreover, we note that these results are not due to having considered a limited range of mutual information values, as these span a broad interval (Fig.~\figsubref{fig:discr_v_gauss}{a}).

Finally, even if the residual discrepancies in \W were entirely due to genuine differences between the minimising distributions, we remark that their magnitude would be comparable to the variability induced by finite-sample effects (Fig.~\figsubref{fig:acc_time_tests}{a}). Hence, in practice, a minimisation without the joint Gaussianity assumption would not yield a meaningful improvement in the estimation of \W, and the results obtained by the Gaussian procedure can be considered as the true minima.

\begin{figure}
    \centering
    \includegraphics[width=1\linewidth]{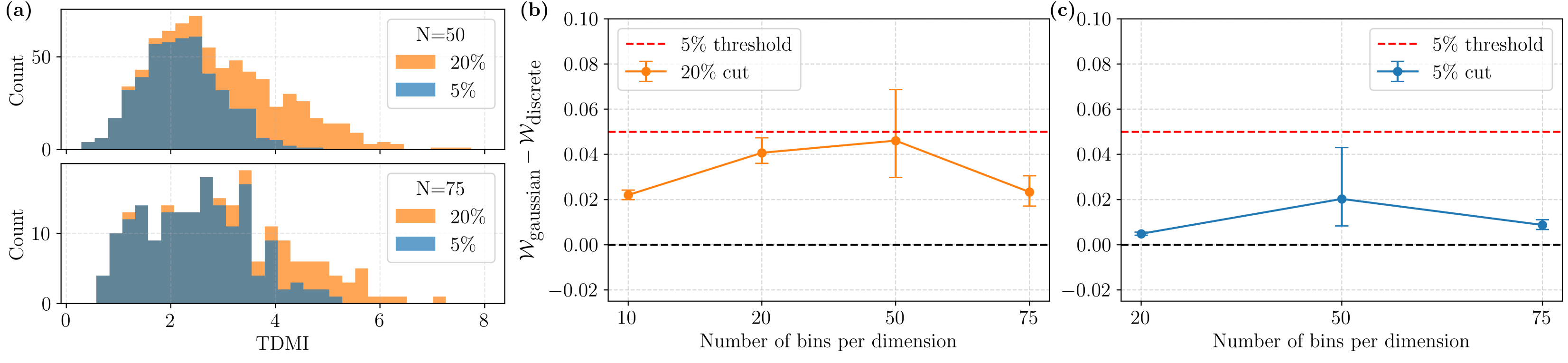}
    \caption{Validation of joint Gaussian solution with discrete systems. (a) Distributions of mutual information values passing the 20\% and 5\% filtering for 50 and 75 bins per dimension. (b) Relative difference between $\mathcal{W}_{gaussian}$ and $\mathcal{W}_{discrete}$ normalised by their respective mutual information for the 20\% filtering. (c) Same as (b) but for the 5\% filtering.}
    \label{fig:discr_v_gauss}
\end{figure}

\section{BROJA-\phiid} \label{app:broja_phiid}

Here we outline the complete derivation of the BROJA-\phiid method presented in Sec.~\ref{sec:broja-phiid}. 
We adopt a practical approach to the framework, focusing on the equations required to compute the \phiid terms. 
For rigorous derivations, theoretical details, and a more general treatment, we refer to the original work \cite{mediano2025toward}.

Given two bivariate random variables $\bX=(X_1,X_2)$ and $\bY=(Y_1,Y_2)$, we aim to decompose the joint mutual information of the system $I(\bX;\bY)$ into redundant ($R$), unique ($U$), and synergistic components ($S$), both in $\bX$ and $\bY$. 
Following~\cite{mediano2025toward}, we start by writing down the functional form of marginal and joint mutual information in terms of the \phiid atoms:
\begin{align}
\allowdisplaybreaks
\label{eq:phiid1}
I(X_1; Y_1) &= \rtr + \rtx + \xtr + \xtx \\[1ex] %
I(X_2; Y_2) &= \rtr + \rty + \ytr + \yty \\[1ex] %
I(X_1; Y_2) &= \rtr + \rty + \xtr + \xty \\[1ex] 
I(X_2; Y_1) &= \rtr + \rtx + \ytr + \ytx \\[1ex] 
I(X_1, X_2; Y_1) &= 
  \begin{aligned}[t]
    &\rtr + \rtx + \xtr + \xtx + \\
    & \ytr + \ytx + \str + \stx
  \end{aligned} \\[1ex] 
I(X_1, X_2; Y_2) &= 
  \begin{aligned}[t]
    &\rtr + \rty + \xtr + \xty+ \\
    & \ytr + \yty + \str + \sty
  \end{aligned} \\[1ex]
I(X_1; Y_1, Y_2) &= 
  \begin{aligned}[t]
    &\rtr + \xtr + \rtx + \xtx +\\
    & \rty + \xty + \rts + \xts
  \end{aligned} \\[1ex]   %
I(X_2; Y_1, Y_2) &= 
  \begin{aligned}[t]
    &\rtr + \ytr + \rtx + \ytx +\\
    &\rty + \yty + \rts + \yts
  \end{aligned} \\[1ex]
I(X_1, X_2; Y_1, Y_2) &=
\begin{aligned}[t]
    &\rtr + \xtr + \ytr + \str + \\
    & \rtx + \xtx + \ytx + \stx + \\
    & \rty + \xty + \yty + \sty + \\
    & \rts + \xts + \yts + \sts \,.
\end{aligned}
\end{align}

Then, six additional constraints are given by six BROJA-PID decompositions on different parts of the system. 
Specifically, if, for instance, we consider $X_1, X_2$ as sources and $Y_1$ as target, we can perform a PID that provides the redundancy $R(X_1, X_2; Y_1)$, which is equal to
\begin{equation}
    R(X_1, X_2; Y_1) = \rtr + \rtx \,.
\end{equation}
Performing a similar procedure on all possible combinations of sources and targets, we obtain the remaining 5 redundancies:
\begin{align}
R(X_1, X_2; Y_1, Y_2) &= \rtr + \rtx + \rty + \rts \\[1ex]
R(X_1, X_2; Y_2) &= \rtr + \rty \\[1ex]
R(Y_1, Y_2; X_1, X_2) &= \rtr + \xtr + \ytr + \str \\[1ex]
R(Y_1, Y_2; X_1) &= \rtr + \xtr \\[1ex]
R(Y_1, Y_2; X_2) &= \rtr + \ytr \,.
\end{align}

Finally, to uniquely determine all 16 atoms, we need an additional condition. We attain this by including all the lower-order terms, i.e.\ those that depend only on redundancy and unique information, in the \Winfo.  
We justify this step as follows:
\begin{enumerate}
    \item redundancies and unique information depend only on the pairwise marginals;
    \item therefore, all atoms that contain only redundancies and unique information are constant in $\Delta_P$.
    \item There is at least one distribution $Q^* \in \Delta_P$ which has only redundancies and unique information, with all other atoms being zero;
    \item this is the distribution that the optimiser will find.
    \item Hence, $\min_{Q\in\Delta_P} I_Q(\bX; \bY)$ equals the sum of all redundancies and unique information.
\end{enumerate}

Thus, we can write
\begin{equation} \label{eq:phiid16}
\begin{split}
    \W =& \rtr + \rtx + \rty +\\ &\xtr + \xtx + \xty +\\ &\ytr + \ytx + \yty \,.
    \end{split}
\end{equation}

Following this reasoning, we can also express \Minfo in terms of the remaining \phiid atoms, obtaining:
\begin{equation}
\begin{split}
    \M =& \rts + \xts + \yts +\\ &\str + \stx + \sty +\\ &\sts \,.
\end{split}
\end{equation}
In other words, \Minfo contains all the atoms that involve synergistic terms, either in $\bX$ or $\bY$.

Thus, the \phiid atoms can be computed by solving the linear system of equations given by Eqs.~\eqref{eq:phiid1}-\eqref{eq:phiid16}.

\section{Methods and data} \label{app:methods_data}

\subsection{Toy models} \label{app:toy_models}
Here we give the details of the toy models presented in Secs.~\ref{sec:results_synthetic}-\ref{sec:broja-phiid}.

Considering two univariate time series $X_t$ and $Y_t$ that follow Gaussian dynamics, we focus on the \phiid decomposition of the mutual information $I(X_t,Y_t;X_{t+1},Y_{t+1})$. 
To validate our methods in controlled scenarios, we construct 16 systems, each designed to maximally express one of the atoms in the decomposition.
Defining an auxiliary Gaussian variable $Z_t\sim\mathcal{N}(0,1)$ and a Gaussian noise term $\varepsilon_t\sim\mathcal{N}(0,10^{-4})$, we repeatedly sample these terms to obtain a time series, and then generate each of the 16 systems as follows, with the dominant atom indicated alongside each configuration:
\begin{enumerate}
    \item \rtr: we correlate both the sources and the targets together:
    \begin{subequations}
    \begin{align}
        X_t & = Z_t + \varepsilon \\
        Y_t & = Z_t + \varepsilon \\
        X_{t+1} & = Z_t + \varepsilon \\
        Y_{t+1} & = Z_t + \varepsilon  \,;
    \end{align}
    \end{subequations}
    \item \rtx: we correlate the sources with the first target:
    \begin{subequations}
    \begin{align}
        X_t & = Z_t + \varepsilon \\
        Y_t & = Z_t + \varepsilon \\
        X_{t+1} & = Z_t + \varepsilon \\
        Y_{t+1} & = \varepsilon  \,;
    \end{align}
    \end{subequations}
    \item \rty: we correlate the sources with the second target:
    \begin{subequations}
    \begin{align}
        X_t & = Z_t + \varepsilon \\
        Y_t & = Z_t + \varepsilon \\
        X_{t+1} & = \varepsilon \\
        Y_{t+1} & = Z_t + \varepsilon  \,;
    \end{align}
    \end{subequations}
    \item \rts: we correlate the targets and make both sources functions of the sum of the targets:
    \begin{subequations}
    \begin{align}
        X_{t+1} & = 100 \varepsilon \\
        Y_{t+1} & = 100 \varepsilon \\
        X_t & = X_{t+1} + Y_{t+1} + \varepsilon \\
        Y_t & = X_{t+1} + Y_{t+1} + \varepsilon \,;
    \end{align}
    \end{subequations}
    \item \xtr: we correlate the first source with the targets:
    \begin{subequations}
    \begin{align}
        X_t & = Z_t + \varepsilon \\
        Y_t & = \varepsilon \\
        X_{t+1} & = Z_t + \varepsilon \\
        Y_{t+1} & = Z_t + \varepsilon  \,;
    \end{align}
    \end{subequations}
    \item \xtx: we correlate the first source with the first target:
    \begin{subequations}
    \begin{align}
        X_t & = Z_t + \varepsilon \\
        Y_t & = \varepsilon \\
        X_{t+1} & = Z_t + \varepsilon \\
        Y_{t+1} & = \varepsilon  \,;
    \end{align}
    \end{subequations}
    \item \xty: we correlate the first source with the second target:
    \begin{subequations}
    \begin{align}
        X_t & = Z_t + \varepsilon \\
        Y_t & = \varepsilon \\
        X_{t+1} & = \varepsilon \\
        Y_{t+1} & = Z_t + \varepsilon  \,;
    \end{align}
    \end{subequations}
    \item \xts: we correlate the targets and make the first source function of the sum of the targets:
    \begin{subequations}
    \begin{align}
        X_{t+1} & = Z_t + \varepsilon \\
        Y_{t+1} & = Z_t + \varepsilon \\
        X_t & = X_{t+1} + Y_{t+1} + \varepsilon \\
        Y_t & = \varepsilon \,;
    \end{align}
    \end{subequations}
    \item \ytr: we correlate the second source with both targets:
    \begin{subequations}
    \begin{align}
        X_t & = \varepsilon \\
        Y_t & = Z_t + \varepsilon \\
        X_{t+1} & = Z_t + \varepsilon \\
        Y_{t+1} & = Z_t + \varepsilon  \,;
    \end{align}
    \end{subequations}
    \item \ytx: we correlate the second source with the first target:
    \begin{subequations}
    \begin{align}
        X_t & = \varepsilon \\
        Y_t & = Z_t + \varepsilon \\
        X_{t+1} & = Z_t + \varepsilon \\
        Y_{t+1} & = \varepsilon  \,;
    \end{align}
    \end{subequations}
    \item \yty: we correlate the second source with the second target:
    \begin{subequations}
    \begin{align}
        X_t & = \varepsilon \\
        Y_t & = Z_t + \varepsilon \\
        X_{t+1} & = \varepsilon \\
        Y_{t+1} & = Z_t + \varepsilon  \,;
    \end{align}
    \end{subequations}
    \item \yts: we correlate the targets and make the second source function of the sum of the targets:
    \begin{subequations}
    \begin{align}
        X_{t+1} & = 100 \varepsilon \\
        Y_{t+1} & = 100 \varepsilon \\
        X_t & = \varepsilon \\
        Y_t & = X_{t+1} + Y_{t+1} + \varepsilon \,;
    \end{align}
    \end{subequations}
    \item \str: we correlate the sources and make both targets functions of the sum of the sources:
    \begin{subequations}
    \begin{align}
        X_t & = 100\varepsilon \\
        Y_t & = 100 \varepsilon \\
        X_{t+1} & = X_t + Y_t + \varepsilon \\
        Y_{t+1} & = X_t + Y_t  + \varepsilon \,;
    \end{align}
    \end{subequations}
    \item \stx: we correlate the sources and make the first target function of the sum of the sources:
    \begin{subequations}
    \begin{align}
        X_t & = 100 \varepsilon \\
        Y_t & = 100 \varepsilon \\
        X_{t+1} & = X_t  + Y_t + \varepsilon \\
        Y_{t+1} & = \varepsilon  \,;
    \end{align}
    \end{subequations}
    \item \sty: we correlate the sources and make the second target function of the sum of the sources:
    \begin{subequations}
    \begin{align}
        X_t & = 100 \varepsilon \\
        Y_t & = 100 \varepsilon \\
        X_{t+1} & = \varepsilon \\
        Y_{t+1} & = X_t  + Y_t +\varepsilon  \,;
    \end{align}
    \end{subequations}
    \item \sts: we correlate the sources and one target and make the other target function of the both sources and target:
    \begin{subequations}
    \begin{align}
        X_t & = 100\varepsilon \\
        Y_t & = 100\varepsilon \\
        X_{t+1} & = 100\varepsilon \\
        Y_{t+1} & = X_t + Y_t - X_{t+1} \varepsilon  \,.
    \end{align}
    \end{subequations}
\end{enumerate}
These systems provide the expected predominant atoms also with other definitions of \phiid (e.g.\ the Minimal Mutual Information formulation introduced in Ref.~\cite{barrett2015exploration} for PID, and generalised to \phiid in Ref.~\cite{mediano2025toward}). Hence, we use them to test both the behaviour of $\M(X_t,Y_t;X_{t+1},Y_{t+1})$, $\W(X_t,Y_t;X_{t+1},Y_{t+1})$ (Sec.~\ref{sec:results_synthetic}), and BROJA-\phiid (Sec.~\ref{sec:broja-phiid}).

\subsection{VAR models} \label{app:VAR_spec}
Vector Autoregression (VAR) is a statistical framework often used in statistics and complexity science to model the evolution of a dynamical system. VAR models have been applied in a variety of areas, such as sociology~\cite{box2014time}, economics~\cite{sims1980macroeconomics}, statistics~\cite{hatemi2004multivariate}, and beyond. 

The advantage of employing VAR models lies in their capability to capture interdependencies in time series data~\cite{barrett2010multivariate}, providing an efficient yet robust methodology to infer the system's information dynamics. 

Specifically, given a multivariate time series $\bX_t=(X_{t}^1, ..., X_{t}^n)^\top$, the defining VAR equation can be written as 
\begin{equation}
    \bX_t = \sum_{l=1}^p A_l \bX_{t-l}+\boldsymbol{\varepsilon}_t \,,
\end{equation}
where $A_l$ is the $n\times n$ matrix of coefficients for the timepoint $t-l$, $\boldsymbol{\varepsilon}$ a Gaussian noise term sampled from $\mathcal{N}(0,\bV)$, and $p$ the VAR model order. Since this system captures contributions up to $p$ timesteps in the past, the model is named VAR(p).

Once the parameters $A$ and $\bV$ are fitted to the data, e.g.\ with standard techniques such as Ordinary Least Squares~\cite{box2015time}, one can use the Yule-Walker equations to estimate the autocovariance matrices of the system~\cite{yule1927vii, walker1931periodicity}, and thus construct the full covariance matrix~\cite{liardi2024null}. 

Since the VAR models considered here are Gaussian stationary processes, we can then calculate \W- and \Minfo as presented in Sec.~\ref{sec:results_implem}. 
For simplicity, all the autoregressive systems considered in this paper are VAR(1) models.

\subsection{Wilson-Cowan model} \label{app:WC_spec}

The Wilson–Cowan (WC) model is a foundational mathematical framework for describing the dynamics of interacting populations of excitatory and inhibitory neurons~\cite{wilson1972excitatory, wilson1973mathematical}. Due to its generality and ease of applicability, the model has become one of the cornerstones of theoretical neuroscience and computational biology~\cite{sejnowski1976global, amit1997model, brunel2000dynamics, renart2007mean, breakspear2017dynamic}, setting the ground for various generalisations~\cite{destexhe2009wilson, coombes2005waves}.

Rather than tracking individual spikes, WC describes the average activity of large populations of excitatory and inhibitory neurons. This coarse-grained approach significantly reduces computational load while retaining enough dynamical richness to reproduce key neural phenomena such as oscillations, bistability, and critical transitions~\cite{ermentrout2010mathematical, cowan2016wilson}.

The basic Wilson–Cowan equations are a pair of coupled non-linear ordinary differential equations (ODEs) that describe how the mean firing rates $E(t)$ and $I(t)$ evolve over time:
\begin{align}
\tau_E \frac{dE(t)}{dt} &= -E(t) + S_E\left( W_{EE} E(t) - W_{EI} I(t) + P_E(t) + \eta_E(t) \right), \\
\tau_I \frac{dI(t)}{dt} &= -I(t) + S_I\left( W_{IE} E(t) - W_{II} I(t) + P_I(t) + \eta_I(t) \right) \,,
\end{align}
where $E(t)$ and $I(t)$ are the mean firing rates of excitatory and inhibitory populations, $\tau_E$ and $\tau_I$ are their respective time constants, $W_{XY}$ denotes the connection strength from population $Y$ to $X$, $P_E(t)$ and $P_I(t)$ represent external inputs, $(\eta_E,\,\eta_I)^\top\sim\mathcal{N}(0,V)$ are Gaussian noise terms, and $S_E(\cdot)$ and $S_I(\cdot)$ are sigmoidal activation functions of the form:
\begin{equation}
S_X(x) = \frac{1}{1 + e^{-\alpha (x - \theta_X)}} \,,
\end{equation}
where $\alpha$ controls the gain of the function, and $\theta_X$ the activation threshold for the species $X$.

For our simulations, the following standard parameters were employed:
\begin{itemize}
    \item $\tau_E=\tau_I=1.0$
    \item $P_E=P_I=0.125$
    \item $W_{EE}=10.0,\,W_{II}=3.0$
    \item $\theta_E=0.2, \,\theta_i=4.0$
    \item $\alpha=1.0$ \,.
\end{itemize}
The coupling strengths $W_{EI}, \,W_{IE}$ were varied across simulations, along with the noise correlation parameter in V:
\begin{equation}
    V = \begin{pmatrix}
        1 & c \\
        c & 1
    \end{pmatrix} \,.
\end{equation}

After simulating the Wilson–Cowan model across a range of parameters, we fitted a Gaussian copula to the joint distribution of excitatory and inhibitory firing rates to estimate the system's full covariance structure. While the underlying dynamics are non-linear and thus deviate from the Gaussianity assumption, prior work in information theory has demonstrated that Gaussian copulas can effectively capture statistical dependencies in such systems~\cite{ince2017statistical}. This is enabled by the invariance of mutual information under smooth invertible transformations, which allows copula-based approaches to retain sensitivity to non-linear interactions.
Finally, \W- and \Minfo were calculated on the system across time by following the algorithm of Sec.~\ref{sec:results_implem}.

\subsection{Neuropixel LFP data} \label{app:data_spec}
We used local field potential (LFP) recordings from the publicly available dataset published by Steinmetz \textit{et al.}~\cite{steinmetz2019distributed}, which includes high-density neural recordings across the mouse brain during a visually guided decision-making task. The dataset was collected using Neuropixels probes inserted into the left hemisphere of head-fixed mice performing a two-alternative unforced choice (2AUC) task. Recordings were made across 39 behavioural sessions from 10 mice, comprising a total of 92 probe insertions.

Each session includes simultaneous recordings from hundreds of neurons and LFP signals across multiple cortical and subcortical regions. A typical probe insertion spanned several anatomically and functionally distinct areas, enabling large-scale, distributed recordings. Task trials involved visual stimuli presented unilaterally, bilaterally, or not at all, with reward contingencies linked to the contrast and location of the stimulus. Mice responded by turning a wheel to indicate the side with higher contrast, or by withholding movement if no stimulus was present. Trials where equal-contrast bilateral stimuli were presented were rewarded randomly. This design allowed dissociation of neural signals related to sensory processing, motor action, and choice.

In addition to active task trials, the dataset includes passive viewing trials, conducted in the absence of task demands or reward. During these sessions, the same visual stimuli were presented while the animal remained disengaged, allowing comparison of evoked neural responses across different behavioural contexts.

Further details on surgical preparation, probe configuration, spike sorting, and behavioural training are described in the original study~\cite{steinmetz2019distributed}.

\section{Additional results}
In this paragraph, we report further results for the synthetic systems, the bias and scalability analysis, and the monkey and mouse brain data. 

\subsection{Bias and scalability study details}
The analyses for the bias corrections and scalability were run on the High Performance Computing (HPC) cluster of Imperial College London, United Kingdom. 
In the case of the bias correction, we first sample 1000 covariance matrices from a Wishart distribution, we generate time series of different lengths by sampling from the original covariance, estimate the covariance from the data, and finally compute \Winfo. We then repeat the process 100 times with different covariances and compute mean and Standard Error of the Mean (SEM) for each sample size (Fig.~\ref{fig:acc_time_tests}). 
For the scalability study, 1000 covariance matrices were randomly sampled from a Wishart distribution for each dimension, and then mean and SEM were estimated. 
In both cases, the optimisations were run as parallel jobs on the HPC, each requesting a maximum of 2 CPUs and 4 GB of memory.

To further analyse the effect of bias correction on the estimation of \Winfo, we compare the bias-corrected and non-bias-corrected results, noting that the correction significantly improves the convergence speed (Fig.~\ref{fig:no_bias_corr}).

\begin{figure}[t]
\centering
\begin{subfigure}{0.49\textwidth}
    \centering
    \hspace*{5mm}\figuretitle{No bias correction}
    \includegraphics[width=1\linewidth]{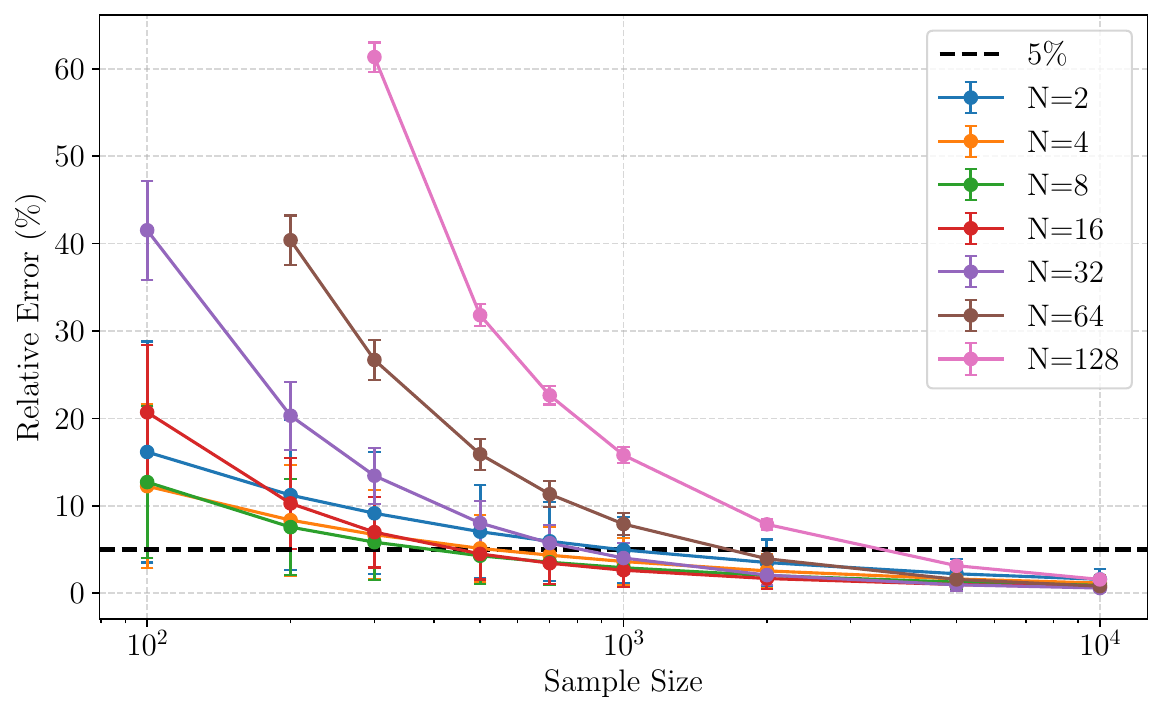}
    \caption{}
    \label{}
    \end{subfigure}
    \hspace*{1mm}
    \begin{subfigure}{0.49\textwidth}
        \centering
        \hspace*{5mm}\figuretitle{Bias correction vs No correction}
        \includegraphics[width=1\linewidth]{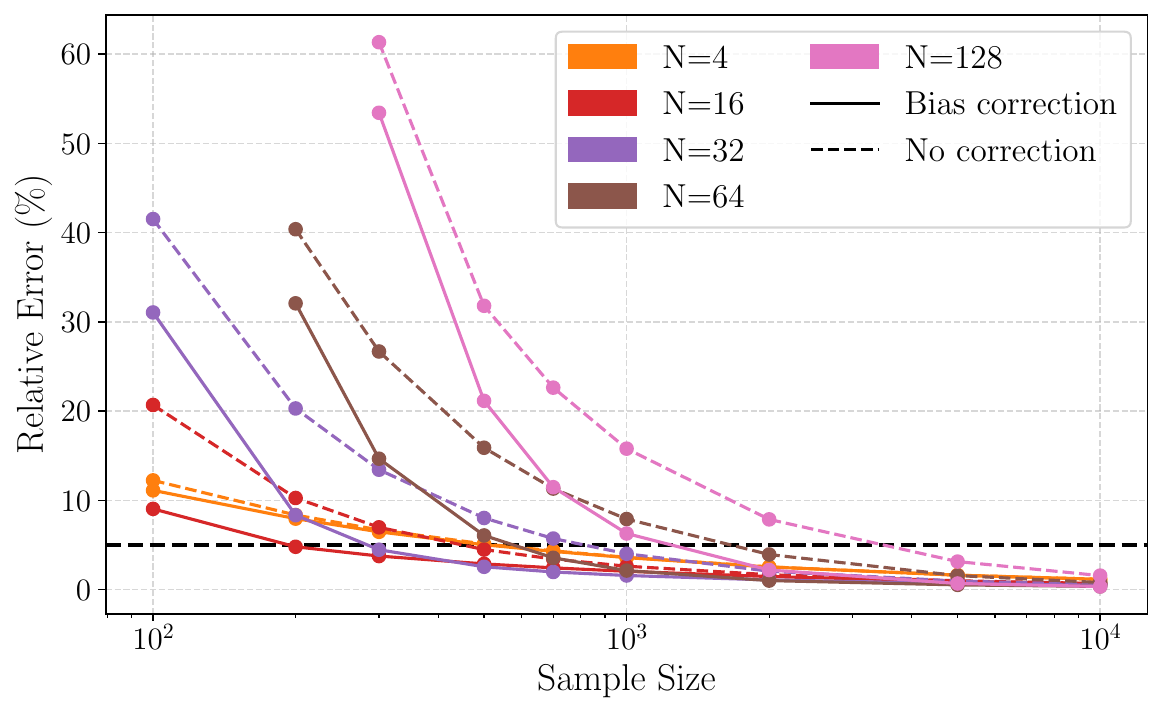}
    \caption{}
    \label{}
    \end{subfigure}
    \caption{(a) Bias estimation of \Winfo against ground truth without bias correction for different system dimensions and sample sizes. Error bars represent the SEM. (b) Comparison of biases with and without bias correction. Error bars were removed for ease of visualisation.}
    \label{fig:no_bias_corr}
\end{figure}

\subsection{High-dimensional autoregressive systems} \label{app:additional_VAR}
Here we extend the analysis presented in Sec.~\ref{sec:results_synthetic} to multivariate time series, studying the behaviour of \Minfo on a larger modular autoregressive model. 
Focusing on $n=10$ time series, we reproduce Experiment 3 by setting the evolution coefficients as 
\begin{equation} \label{eq:VAR_eq3}
    A = 
    \begin{pmatrix}
        \mathcal{A} & \mathcal{B} \\
        \mathcal{B} & \mathcal{A}
    \end{pmatrix} \,,
\end{equation}
where $\mathcal{A}$ is a matrix of size $5 \times 5$ with all entries equal to $a$, and $\mathcal{B}$ is a $5 \times 5$ matrix with all entries equal to $b$, with $a, b \in (0, 1)$. 
The residual covariance matrix $V$ is equal to $c$ on the off-diagonal, and ones on the diagonal, i.e.\ $V_{ij} = c$ for $i \neq j$ and $V_{ii} = 1$. 
As in Sec.~\ref{sec:results_synthetic}, we simulate the dynamics of Eq.~\eqref{eq:var_eq_gen}, rescaling the matrix $A$ so that the system provides the same joint mutual information for each choice of $(a,b)$, and then calculate \W- and \Minfo.

Similarly to the $n=2$ case, we notice that for low values of noise \Minfo is larger when the self and cross-couplings $a$ and $b$ are similar in magnitude, whereas for higher noise correlations \Minfo is maximised for asymmetric interactions (Fig.~\ref{fig:VAR_N10_modular}). 

\begin{figure}[ht]
    \centering
    \hspace*{1mm}\figuretitle{Modular autoregressive model for $\boldsymbol{n=10}$}
    \includegraphics[width=\linewidth]{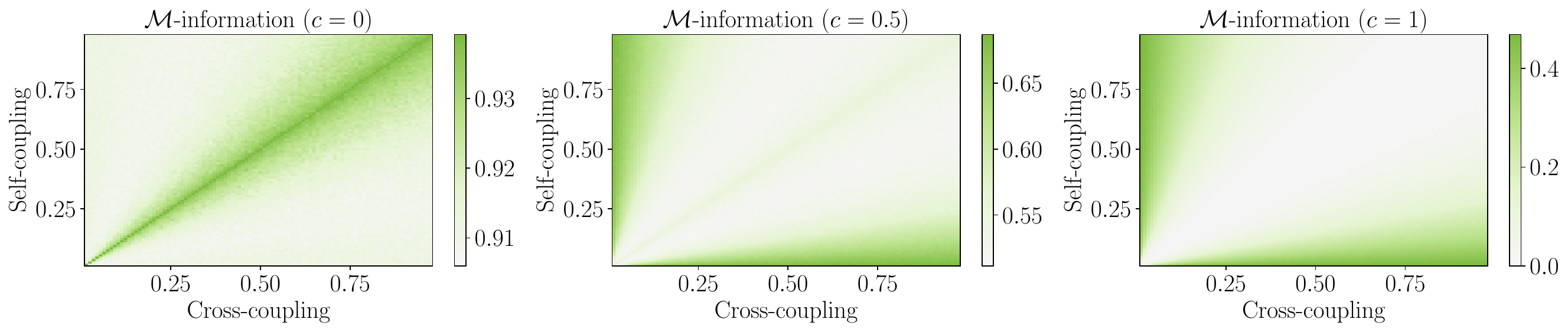}
    \caption{Modular VAR model for $n=10$ time series, with parameters given by Eq.~\eqref{eq:VAR_eq3}. For low values of noise, \Minfo is larger for comparable self and cross-coupling. For higher noise correlations, \Minfo is maximised when either the self-coupling is high and the cross-coupling is low, or vice versa.}
    \label{fig:VAR_N10_modular}
\end{figure}

\begin{figure}
    \centering
    \includegraphics[width=0.9\linewidth]{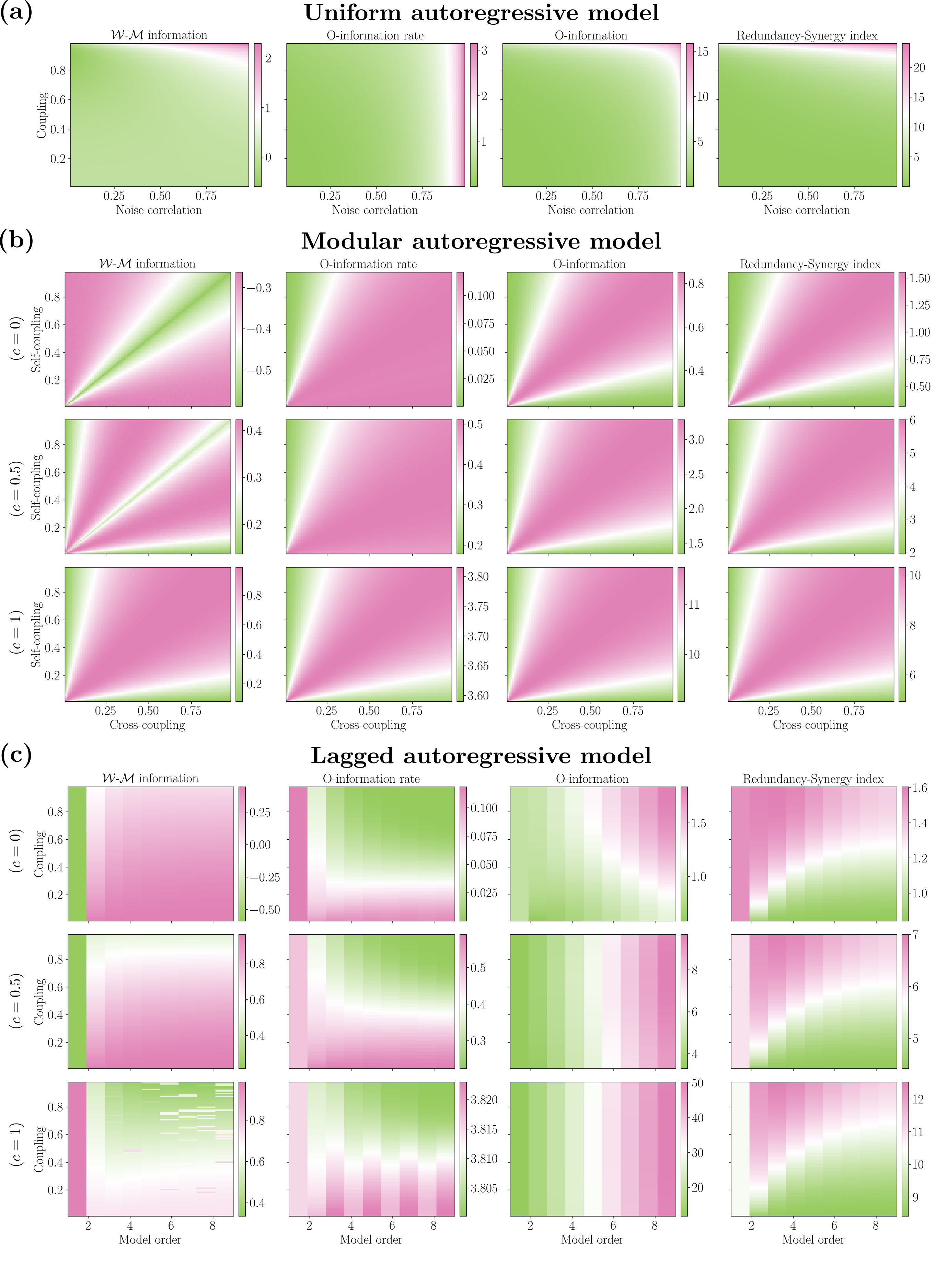}
    \caption{Comparison between \W and \Minfo, O-information rate, O-information, and redundancy-Synergy index on the uniform, modular, and autoregressive models.}
    \label{fig:comparison}
\end{figure}

\subsection{Comparison with existing measures}
In this section, we provide the complete analysis underlying the comparison between \W and \Minfo and existing information-theoretic measures of higher- and lower-order statistical dependencies. In particular, we consider the Organisational information (O-information) \cite{rosas2019quantifying}, its time-resolved extension, the O-information rate (OIR) \cite{faes2022new}, and the Redundancy-Synergy index (RSI) \cite{chechik2001group}. 
Given a multivariate system $(X_1,\ldots,X_n)$, the O-information is defined as the difference between the Total Correlation (TC) and the Dual Total Correlation:
\begin{equation} \label{eq:O-info_def}
\begin{split}
    \Omega(X_1,\ldots,X_n) & := \text{TC}(X_1,\ldots,X_n) - \text{DTC}(X_1,\ldots,X_n) \\
    & = \sum_i^n \left( H(X_i) - H\left( X_i \mid X_{-i}\right) \right) \\
    & = (n-2)H(X_1, \ldots, X_n) - \sum_{i=1}^{n} (H(X_i) - H(X_{-i})) \,,
\end{split}
\end{equation}
where we used the shorthand notation $X_{-i}= (X_1, \ldots, X_{i-1}, X_{i+1}, \ldots, X_n)$.

The O-information rate (OIR) extends this formulation to time-evolving processes, and can be obtained by replacing the entropies in Eq.~\eqref{eq:O-info_def} with entropy rates:
\begin{equation}
     \Omega_r(X_1,\ldots,X_n) := (n-2)H_r(X_1, \ldots, X_n) - \sum_{i=1}^{n} (H_r(X_i) - H_r(X_{-i})) \,,
\end{equation}
where the entropy rate of a process $X_i$ is defined as 
\begin{equation}
    H_r(X_i) = \lim_{t \to \infty} \frac{1}{t} H(X_1^i, X_2^i, \ldots, X_t^i) \,.
\end{equation}

Finally, the Redundancy–Synergy index is a directed measure that quantifies the net balance between redundant and synergistic contributions of a set of sources $(X_1,\ldots,X_n)$ and targets $(Y_1,\ldots,Y_n)$. RSI is defined as
\begin{equation}
\text{RSI}(X_1,\ldots,X_n;Y_1,\ldots,Y_n)
:= \sum_{i,j} I(X_i; Y_j) - I(X_1,\ldots,X_n; Y_1,\ldots,Y_n) \,.
\end{equation}
Negative values of RSI indicate dominance of synergy, while positive contributions suggest a dominant presence of redundancy.

We carry out the analysis on the uniform and modular autoregressive models introduced in Sec.~\ref{sec:results_synthetic}, as well as on an additional dynamical system designed to emphasise the role of temporal structure. We refer to this latter system as the \textit{lagged} autoregressive model, defining it as a VAR($p$) process with variable model order $p$. More specifically, the first lag encodes instantaneous self-coupling only, with $A_{1_{ij}}=0$ for $i \neq j$ and $A_{1_{ii}}=a$, where $a \in (0,1)$. Cross-coupling is instead introduced exclusively at lag $p$, such that $A_{p_{ij}}=a$ for $i \neq j$ and $A_{p_{ii}}=0$, while all remaining coefficient matrices vanish, i.e.\ $A_n=0$ for $n \neq 1,p$. The noise covariance matrix has unit variance on the diagonal ($V_{ii}=1$) and homogeneous correlation $V_{ij}=c$ for $i \neq j$.

Since O-information-based measures are identically zero for bivariate systems, all analyses are conducted on systems of dimension $n=4$. Moreover, as O-information, OIR, and RSI only assess the balance between lower- and higher-order dependencies, we compare them with the difference \W-\M, which,  within our approach, captures the analogous balance. For the modular and lagged scenarios, we consider three representative values of the noise correlation $c$, namely $c\in\{0,0.5,1\}$. The complete findings are presented in Fig.~\ref{fig:comparison}.

Overall, we note that O-information and RSI tend to show similar patterns, which is expected from the close relationship between the two measures \cite{rosas2024characterising}. 
In the uniform model, \W-\M, O-information, and RSI exhibit qualitatively similar dependence on the parameters $a$ and $c$. OIR, however, presents reduced sensitivity in discriminating different coupling strengths $a$, with an average relative change smaller than $2\%$ along this dimension.
In the modular case, \W-\M, O-information, RSI, and OIR all behave similarly for large noise correlation $c$. However, as $c$ decreases, OIR and \W-\M significantly change trend, whereas O-information and RSI present similar patterns $\forall c$, with an average relative variation an order of magnitude smaller than that of OIR and \W-\M. We argue that the behaviour of OIR and \W-\M is expected, as it suggests different information organisations when the noise correlation in the system is absent versus when it dominates, as discussed in Sec~\ref{sec:results_synthetic}. Finally, for small values of $c$, OIR struggles to differentiate configurations with $b>a$. 
In the lagged system, \W-\M and OIR present broadly consistent qualitative trends, with larger $a$ enhancing higher-order structures, though with different susceptibility to the noise correlation parameter.

In sum, we found that \W and \Minfo broadly agree with existing measures, while also providing additional insights in regimes where standard metrics show limited sensitivity. In particular, they can discriminate coupling configurations in the uniform case, distinguish various noise correlations in the modular case, and provide separate estimates of lower- and higher-order information, rather than only their net balance.

\subsection{Additional results on macaques} \label{app:additional_monkey}
In this section, we briefly complement the results reported in the main article with \Minfo computed on the macaque brain across different states of consciousness by examining \Winfo. As in the previous analysis, we compare awake states with sleep, recovery, light and deep anaesthesia.  
Since we consider normalised quantities, it is no surprise that \Winfo shows opposite behaviours to \M. Interestingly, however, while this inverse relationship is indeed observed, differences in \Winfo are not statistically significant. This indicates that, within this framework, variations in levels of consciousness are primarily driven by changes in higher-order interdependencies, rather than by alterations in lower-order organisation.

\begin{figure}[ht]
    \centering
    \includegraphics[width=1\linewidth]{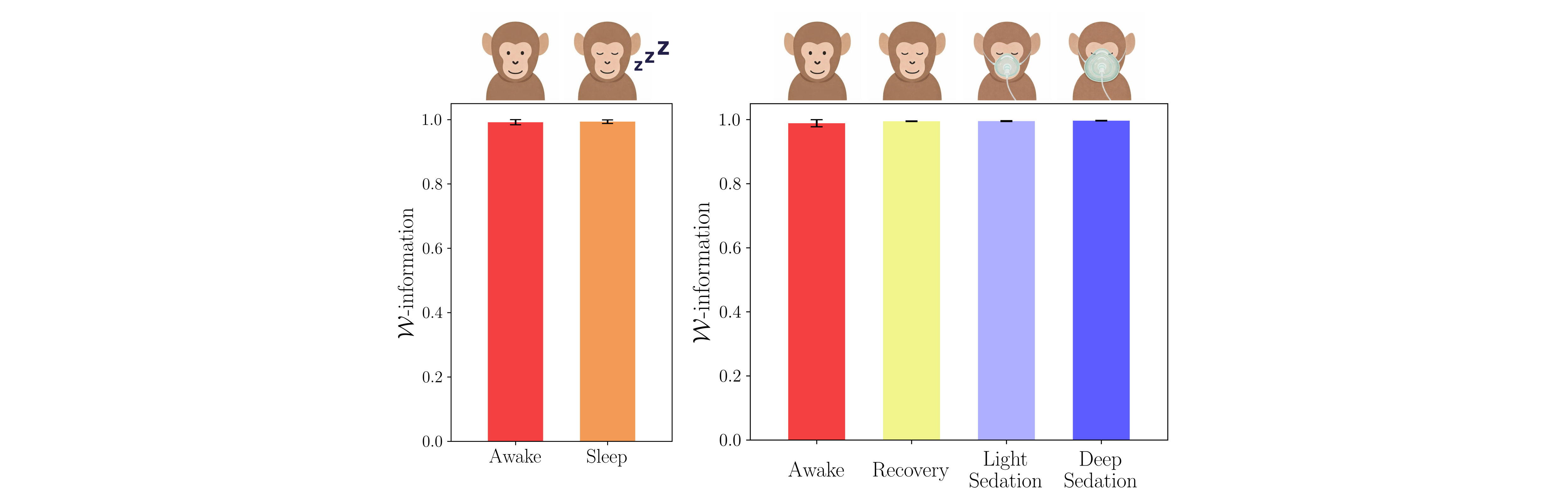}
    \caption{Normalised \Winfo for different states of consciousness in the macaque brain. \W is calculated on all possible pairwise channel combinations and then averaged across pairs. No significant differences are observed between conditions.}
    \label{fig:monkey_app}
\end{figure}

Hence, as mentioned in the main article, these results align with previous findings in the literature \cite{faes2022new} and further clarify the distinct contributions of lower- and higher-order interdependencies.

\subsection{Further analyses on mouse brain} \label{app:additional_mice}
Here, we complement the Neuropixel results presented in the main article with additional insights.
First, we look at the behaviour of Mutual information, \Winfo, and (non-normalised) \Minfo (Fig.~\ref{fig:MI_W_M_info_app}).

\begin{figure}[ht]
    \centering
    \includegraphics[width=1\linewidth]{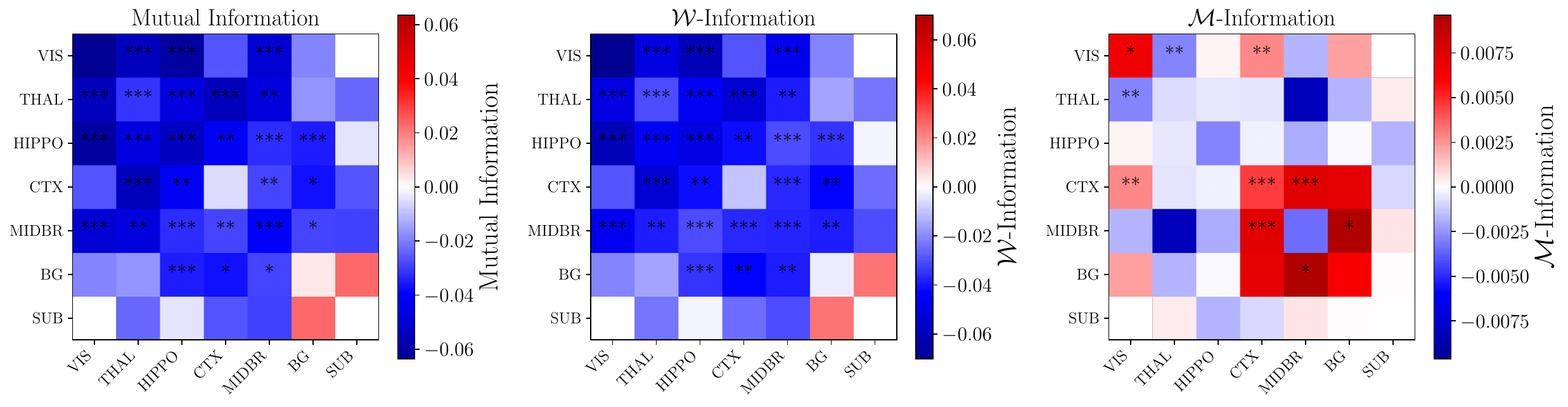}
    \caption{Mutual information, \Winfo, and \Minfo per brain region for correct-incorrect trials in the visual discrimination mice task. (P-values calculated with a one-sample t-test against the zero-mean null hypothesis. *: $p<0.05$; **: $p<0.01$; ***: $p<0.001$).}
    \label{fig:MI_W_M_info_app}
\end{figure}

We notice that the mutual information is significantly higher in the incorrect condition, which can be attributed to a decrease in entropy rate of the neural signal\footnote{The sum of entropy rate and mutual information should yield the total entropy of the time series, which we normalised to 1. Hence, an increase in entropy rate is equivalent to a decrease in mutual information, and vice versa. }. 
We observe consistently lower \Winfo for correct trials, and topographically-specific increases and decreases in \Minfo. 
However, to make more robust comparisons across different systems, a normalisation procedure is needed to factor out the contribution of the mutual information magnitude~\cite{liardi2024null}.  
Hence, in the main body of the work, we showed the values of \Minfo normalised by the joint mutual information. 
To complement those results, here we also report the behaviour of normalised \Winfo, which unexpectedly shows the opposite trend than normalised \Minfo (Fig.~\ref{fig:norm_W_info_app}). 

\begin{figure}[ht]
    \centering
    \includegraphics[width=0.5\linewidth]{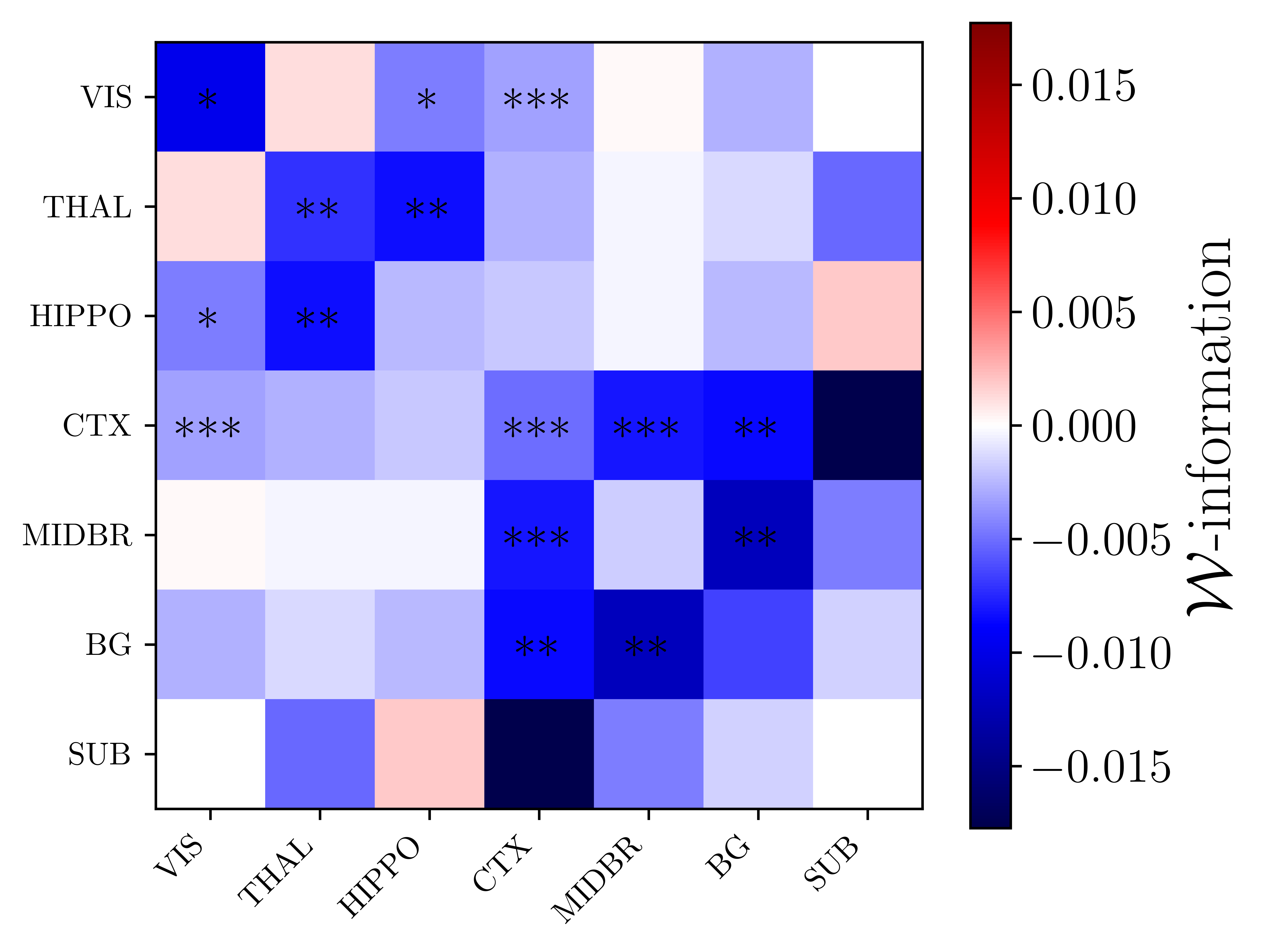}
    \caption{Normalised \Winfo per brain region for correct-incorrect trials. (P-values calculated with a one-sample t-test against the zero-mean null hypothesis. *: $p<0.05$; **: $p<0.01$; ***: $p<0.001$).}
    \label{fig:norm_W_info_app}
\end{figure}

Finally, we focus on the \Minfo calculated between specific theory-driven regions of interest in the mouse brain (Fig.~\ref{fig:specific_ROI}). Namely, we consider:
\begin{itemize}
  \item \textbf{VISp} – Primary Visual Cortex
  \item \textbf{CL} – Central Lateral Nucleus %
  \item \textbf{LGd} – dorsal Lateral Geniculate Nucleus
  \item \textbf{LP} – Lateral Posterior Nucleus %
  \item \textbf{MD} – Mediodorsal Nucleus %
  \item \textbf{POL} – Posterior Lateral Visual Area
  \item \textbf{CA1} – Cornu Ammonis Area 1 %
  \item \textbf{ACA} – Anterior Cingulate Area
  \item \textbf{ILA} – Infralimbic Area
  \item \textbf{RSP} – Retrosplenial Area
\end{itemize}

As expected, we observe an increase in \Minfo between early visual cortex and every other key cortical and subcortical region. Thus, despite our previous observations showing no increase in higher-order thalamo-visual interactions during correct perceptual performance, we find an increase in specific instrumental thalamic nuclei such as the LGd, LP and MD. 
This provides preliminary evidence that low-level sensory subsystems may support perceptually relevant higher-order information structures. Hence, the thalamus, far from being a simple relay station, might possess the capacity for complex macroscopic computation.

We emphasise that these findings are only suggestive of the behaviours discussed, as the small sample size prevented statistical validation. A promising avenue for future work is to rigorously test the hypotheses presented above.  

\begin{figure}[ht]
    \centering
    \includegraphics[width=\linewidth]{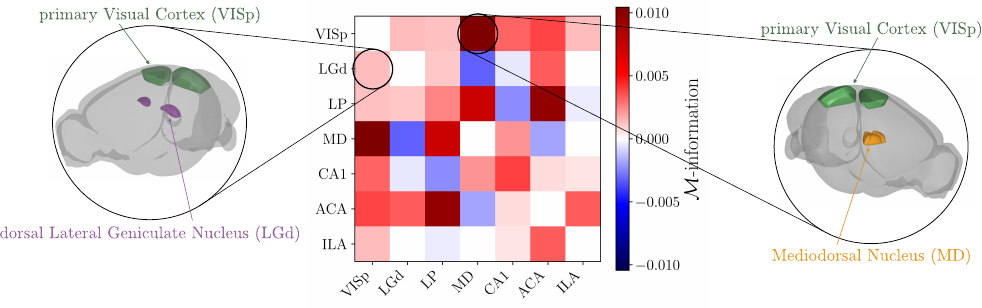}
    \caption{Normalised \Minfo on correct-incorrect trials for specific regions of interest in the mouse brain.%
    }
    \label{fig:specific_ROI}
\end{figure}

\end{document}